\newtheorem{thm}{Theorem}
\newtheorem{lem}{Lemma}
\newtheorem{obs}{Observation}
\newtheorem{defn}{Definition}
\newtheorem*{defn*}{Definition}
\newcommand{\bet}{\begin{theorem}}
\newcommand{\beq}{\begin{equation}}
\newcommand{\enq}{\end{equation}}
 \definecolor{darkblue}{rgb}{0,0.0.1,0.3}
 \definecolor{darkred}{rgb}{0.6,0.1,0}
\begin{document}
\title{Graph Theoretic Quantum Contextuality and Unextendible
Product Bases}
\author{Gurvir Singh}
\email{gurvirsingh@iisermohali.ac.in}
\affiliation{Department of Physical Sciences,
Indian Institute of Science Education and
Research Mohali, Punjab 140306 India}
\author{Arvind}
\email{arvind@iisermohali.ac.in}
\affiliation{Department of Physical Sciences,
Indian Institute of Science Education and
Research Mohali, Punjab 140306 India}
\begin{abstract}
Unextendible product bases(UPBs) are central to the study of
local distinguishability of orthogonal product
states.  While their connection to quantum
nonlocality via Bell inequalities is well
established, their link to quantum contextuality
remains largely unexplored. We establish a graph
theoretic connection between contextuality and UPBs.
First, an equivalence between
Klyachko-Can-Binicio\u{g}lu-Shumovsky (KCBS) vectors
and the Pyramid UPB is shown and then by
constructing a one parameter family of UPB vectors,
a quantitative connection between
\textit{`contextuality strength}' and bound
entanglement of states associated with the
corresponding UPB is demonstrated.  This equivalence
is extended to generalized KCBS vectors and the {\it
GenPyramid} UPB.  A new class of minimal UPBs in
\(\mathbb{C}^3 \otimes \mathbb{C}^n\) is constructed
using \textit{Lov\'asz-optimal} orthogonal
representations (LOORs) of cycle graphs and their
complements which we term the \textit{GenContextual}
UPB.  Any minimal UPB in this dimension is shown to
be graph-equivalent to the
\textit{GenContextual} UPB. We briefly discuss the
distinguishability properties of
\textit{GenContextual} UPB.  In the reverse
direction, we observe that the constituent
vectors of the \textit{QuadRes} UPB are LOORs of
Paley graphs. The structural properties of these
graphs make them suitable candidates for
constructing noncontextuality inequalities, thereby
establishing a bidirectional connection between
quantum contextuality and UPBs.
\end{abstract}
\maketitle 
\noindent
\textit{Introduction.--}
Graph theoretic concepts have been used as a powerful tool
in the study of quantum contextuality as well as of
Unextendable Product Bases UPBs~\cite{Csw2010}.
Cabello-Severini-Winter introduced a graph-theoretic
framework for contextuality, where measurement compatibility
relations are represented by graphs~\cite{Csw2010,Csw2014}.
This approach connects quantum contextuality to graph
theoretic quantities like the independence number and the
Lovász number~\cite{Lovasz1989}.  Graph theoretic techniques
have also been employed to express orthogonality relations
and to reformulate the unextendibility condition of
UPBs~\cite{Fei2023}. We establish an \textit{a priori} unexpected
connection between contextuality and UPBs based on
the underlying graph theoretic structure.

UPBs are intriguing objects~\cite{Bennett1999b}, as they
connect on the one hand to \textit{nonlocality without
entanglement} (NLWE)~\cite{Bennett1999a} and on the other to
bound entanglement~\cite{DiVincenzo2003}. NLWE originates
from the fact that UPBs are sets of orthogonal product
states that cannot be perfectly distinguished using local
operations and classical communication (LOCC); the bound
entanglement arises since the projector onto orthogonal
complement of UPBs defines an entangled state which is
positive under partial transpose by construction.  The
distinguishability of UPBs under various kinds of
measurements~\cite{Cohen2008,Cohen2022,Cohen2023,Bandyopadhyay2015}
has been extensively studied.

Contextuality marks a fundamental departure from classical
behavior of physical systems.  Quantum theory is contextual
as the outcomes of measurements depend on the measurement
context and therefore the existence of non-contextual hidden
variable models to explain quantum correlations is ruled
out~\cite{specker1967}.  Similar to how Bell inequality
violations indicate nonlocality, contextuality is
experimentally witnessed through violations of
noncontextuality
inequalities~\cite{Cabello2008a,Cabello2008b,Badziag2009},
the simplest of which was introduced by
KCBS~\cite{Kcbs2008}.

Connections between Bell inequalities with no quantum
violation and UPBs have been known for over a
decade~\cite{Augusiak2011}. Notably, a one to one
correspondence was established between tight Bell
inequalities with no quantum violation and certain
multipartite UPBs, leading to the construction of a class of
two qubit multipartite UPBs~\cite{Augusiak2012}. 
The derivation of these Bell inequalities is based on
the principle of local
orthogonality~\cite{Fritz2013,Sainz2014}, which in turn is a
consequence of the exclusivity principle
applied to Bell scenarios; a concept originally proposed to
explain the maximal quantum violation of a noncontextuality
inequality~\cite{Csw2010}. The exclusivity principle states that the
sum of probabilities of pairwise exclusive events cannot
exceed one, with exclusivity represented by orthogonality.
Motivated by the significant connection between Bell
nonlocality and UPBs, we explore, establish and
generalise a direct
link between contextuality and UPBs through a
graph theoretic framework.

We notice that the five vector in $\mathbb{C}^3$ that are
motifs of the Pyramid UPB are equivalent to the KCBS vectors.
Motivated by this equivalence  we construct a one-parameter family of
five contextual vectors in $\mathbb{C}^3$ with
corresponding UPBs in $\mathbb{C}^3 \otimes \mathbb{C}^3$
and show that the {\it `contextuality strength'} of these
vectors is quantitatively connected with the bound
entanglement of the states associated with the corresponding
UPBs. This family contains Pyramids and Tiles as special
cases and thus we provide an explanation of their different
linear entropy of entanglement based on contextuality.
The equivalence is further extended to the Generalized KCBS
vectors, which constitute the \textit{Lovász  optimal}
orthogonal representations (LOORs) of odd cycle graphs
$C_n$, and a class of UPBs known as {\it GenPyramid} UPBs.
While previous constructions of {\it GenPyramid} UPBs were
limited to prime dimensions~\cite{DiVincenzo2003}, we find
that such UPBs also exist in certain non prime dimensions.
Building on the LOORs of odd cycle graphs and their
complements, we introduce a new class of minimal UPBs in
$\mathbb{C}^3 \otimes \mathbb{C}^n$ for odd $n$, which we
refer to as \textit{GenContextual} UPB, and prove that any
minimal UPB in these dimensions is graph equivalent to
the corresponding \textit{GenContextual} UPB.
Further, we observe that the connection between UPBs and
contextuality can be exploited in the opposite direction; we
show that a class of minimal UPBs known as \textit{QuadRes}
UPB can be realized as LOORs of Paley graphs, identifying
them as promising candidates for further exploration of
quantum contextuality.
\begin{figure}[t!]
\centering
\includegraphics[scale=0.75]{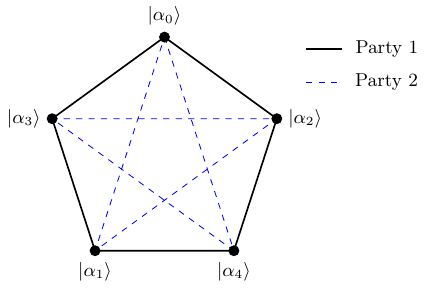}
\caption{
Orthogonality graph of UPBs in $\mathbb{C}^3 \otimes
\mathbb{C}^3$.  The KCBS vectors also exhibit the
same pentagonal ($C_5$) orthogonality structure.
Vertices label the five product states
$|\alpha_i\rangle$ and \textit{edges encode
orthogonality}.  Thick edges (Party~1) form a
pentagon $C_5$ with cycle
$0\!\to\!2\!\to\!4\!\to\!1\!\to\!3\!\to\!0$.  The
overall graph is the complete graph $K_5 = C_5 \cup
\overline{C}_5$, with Party~1 forming the pentagon
($C_5$) and Party~2 its complement $\overline{C}_5$.
Note that $C_5 \cong \overline{C}_5$.}
\label{orthogonality_upb}
\end{figure}

\noindent
\textit{Contextuality-UPB correspondence.--}
Consider a one parameter family of five normalized vectors
in $\mathbb{C}^3$ parameterized by $\theta$
\begin{eqnarray}
|\alpha_0\rangle &=& \sin^2\theta |0\rangle - \sin\theta \cos\theta
|2\rangle + \cos\theta |1\rangle, \nonumber\\
|\alpha_1\rangle &=& |0\rangle, \nonumber\\
|\alpha_2\rangle &=& \cos\theta |0\rangle + \sin\theta |2\rangle, \nonumber\\
|\alpha_3\rangle &=& \frac{1}{N} \left( \sin\theta \cos\theta |1\rangle +
\cos\theta |2\rangle \right), \nonumber\\
|\alpha_4\rangle &=& |1\rangle, \nonumber\\
&&{\rm with}\,\, N = \sqrt{\cos^2\theta + \sin^2\theta \cos^2\theta}.
\label{one-parameter}
\end{eqnarray}
These vectors follow the orthogonality relations shown in
Fig.~\ref{orthogonality_upb}.
Using these five vectors, we can construct a one parameter
family of UPB's (a subset of the six parameter family introduced
in~\cite{DiVincenzo2003}) in $\mathbb{C}^3\otimes \mathbb{C}^3$
given by the vectors:
\begin{equation}
\vert \alpha_j \rangle \otimes \vert \beta_j\rangle =
\vert \alpha_j \rangle \otimes \vert \alpha_{2j\; \bmod \;5}\rangle
\end{equation}
This one parameter family of UPB's parameterized by $\theta$
is local unitarily equivalent to Tiles UPB for
$\theta=3\pi/4$ and to the Pyramid UPB for $\theta=
\cos^{-1}{(\sqrt{5}-1})/2$, the two canonical
representatives of the six parameter family of UPBs in
$\mathbb{C}^3\otimes\mathbb{C}^3$~\cite{DiVincenzo2003}.  In
particular, for $\theta=\cos^{-1}{(\sqrt{5}-1})/2$, the
vectors $\vert \alpha_j\rangle$, after the application
of a suitable local unitary can be brought to the
standard Pyramid vectors given by:
\begin{equation} \label{pyramid vector}
\!|\psi_j\rangle\! =\!
\frac{\scriptstyle 2}{\scriptstyle \sqrt{5 + \sqrt{5}}}
\!\left[\!\cos\frac{2\pi j}{5}|0\rangle \!+\!
\sin\frac{2\pi j}{5}|1\rangle \!+\!
{\scriptstyle \frac{1}{2}\sqrt{1+\sqrt{5}}}\,|2\rangle\! \right].
\end{equation}

Consider the  KCBS vectors; the vectors corresponding
to the simplest form of
state dependent contextuality~\cite{Kcbs2008}:
\begin{eqnarray}
\label{kcbs_vectors}
|\phi_0\rangle &\! \equiv\! & \frac{\scriptstyle
1}{\scriptstyle \sqrt{N}}\left[
1, 0, \sqrt{\cos (\pi/5)} \;\right]^T, \nonumber \\
|\phi_{1, 4}\rangle &\! \equiv\!& \frac{\scriptstyle
1}{\scriptstyle \sqrt{N}}\left[ \cos(4\pi/5), \pm\sin(4\pi/5),
\sqrt{\cos(\pi/5)}\; \right]^T, \nonumber \\
|\phi_{2, 3}\rangle &\!  \equiv\!& \frac{\scriptstyle
1}{\scriptstyle \sqrt{N}}\left[ \cos(2\pi/5), \mp\sin(2\pi/5),
\sqrt{\cos(\pi/5)}\; \right]^T,
\end{eqnarray}
with $N = 1 + \cos(\pi/5)$.
\begin{obs}
The set of vectors defining the Pyramid UPB~\eqref{pyramid
vector} are identical to the KCBS vector
set~\eqref{kcbs_vectors}. In fact for every nonzero value of
$\theta$ the vectors $\vert \alpha_j\rangle$ provide contextual
vectors and they all have the orthogonality graph
	$C_5$ given in Fig.~\ref{orthogonality_upb}.
\end{obs}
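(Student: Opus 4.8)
The plan is to treat the two assertions separately. For the first---that the Pyramid vectors \eqref{pyramid vector} and the KCBS vectors \eqref{kcbs_vectors} are the same set---I would put both families into the common normal form $|v\rangle=\sqrt{1-t^2}\,\bigl(\cos\varphi\,|0\rangle+\sin\varphi\,|1\rangle\bigr)+t\,|2\rangle$, in which a Pyramid (resp.\ KCBS) vector is specified by a fixed third component $t$ together with a planar direction $\varphi$. For the Pyramid, $t=\sqrt{1+\sqrt5}/\sqrt{5+\sqrt5}$ and $\varphi\in\{2\pi j/5\}_{j=0}^{4}$; for KCBS, $t=\sqrt{\cos(\pi/5)/\bigl(1+\cos(\pi/5)\bigr)}$ and $\varphi\in\{0,\pm2\pi/5,\pm4\pi/5\}$. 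It then remains to check (i) that the third components coincide, which reduces to $\tfrac{1+\sqrt5}{5+\sqrt5}=\tfrac{\cos(\pi/5)}{1+\cos(\pi/5)}$ and follows from $\cos(\pi/5)=(1+\sqrt5)/4$, and (ii) that the two sets of planar directions coincide, which is immediate since $\cos$ is even and the pairs $(\cos\tfrac{4\pi}{5},\pm\sin\tfrac{4\pi}{5})$, $(\cos\tfrac{2\pi}{5},\mp\sin\tfrac{2\pi}{5})$ are already of the form $(\cos\tfrac{2\pi k}{5},\sin\tfrac{2\pi k}{5})$. Matching labels then gives $|\phi_j\rangle=|\psi_{2j\bmod 5}\rangle$, i.e.\ precisely the relabelling $j\mapsto 2j\bmod 5$ appearing in the UPB construction.

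For the second assertion I would first verify that the orthogonality graph of the family \eqref{one-parameter} is $C_5$. Writing $x=\sin\theta$, $y=\cos\theta$ and simplifying $|\alpha_3\rangle$ to $(0,x,1)^{T}/\sqrt{1+x^2}$ on the range $\theta\in(0,\pi/2)$, I would list all ten inner products $\langle\alpha_i|\alpha_j\rangle$. Exactly five vanish identically---the pairs $\{0,2\}$, $\{2,4\}$, $\{4,1\}$, $\{1,3\}$, $\{3,0\}$---and these close into the $5$-cycle $0\!\to\!2\!\to\!4\!\to\!1\!\to\!3\!\to\!0$ of Fig.~\ref{orthogonality_upb}; the remaining five equal, up to a positive normalization, $x^2$, $y$, $y$, $x/\sqrt{1+x^2}$, $x/\sqrt{1+x^2}$, hence are nonzero precisely when $\sin\theta\cos\theta\neq 0$. (When $\sin\theta=0$ or $\cos\theta=0$ the family degenerates---two of the $|\alpha_j\rangle$ coincide, or $N=0$---so the phrase ``every nonzero $\theta$'' is to be read as $\sin\theta\cos\theta\neq0$.) Thus on this range the orthogonality graph is exactly $C_5\cong\overline{C}_5$.

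It then remains to show these vectors are contextual, i.e.\ that they violate the pentagon noncontextuality inequality. Let $M(\theta)=\sum_{j=0}^{4}|\alpha_j\rangle\langle\alpha_j|$, a positive $3\times3$ matrix with $\mathrm{tr}\,M=5$. In the graph-theoretic framework the inequality reads $\langle\sum_j P_j\rangle\le\alpha(C_5)=2$, and over quantum states the left-hand side is maximized at $\lambda_{\max}(M)$; so the claim is $\lambda_{\max}(M(\theta))>2$. Because $\mathrm{tr}\,M=5<6$ the matrix $M$ cannot have three eigenvalues exceeding $2$, so it is enough to prove $\det(2I-M)<0$: a negative value of $\prod_i(2-\lambda_i)$ then forces exactly one eigenvalue above $2$, which must be $\lambda_{\max}$. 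Evaluating this determinant is the one genuinely computational step---and the place where care is needed---but it telescopes nicely: every entry of $2I-M$ carries a factor $x^2$, and repeated use of $x^2+y^2=1$ collapses the $2\times2$ cofactors, leaving the compact expression $\det(2I-M)=-\,x^4y^2/(1+x^2)=-\sin^4\theta\cos^2\theta/(1+\sin^2\theta)$, which is strictly negative exactly when $\sin\theta\cos\theta\neq0$. Hence $\lambda_{\max}(M(\theta))>2$ for all such $\theta$; at the Pyramid value $\theta=\cos^{-1}\!\bigl(\tfrac{\sqrt5-1}{2}\bigr)$ this is consistent with the known $\lambda_{\max}=\sqrt5$, and the main obstacle in a full write-up is simply organizing that one cofactor computation so the cancellations are transparent.
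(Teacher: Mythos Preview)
Your argument is correct in both parts, and in fact you have supplied a genuine proof where the paper gives none: Observation~1 is simply \emph{stated} in the paper, with no derivation---the identification of \eqref{pyramid vector} with \eqref{kcbs_vectors} is asserted, the $C_5$ orthogonality structure is referred to Fig.~\ref{orthogonality_upb}, and the contextuality claim is taken for granted. So there is no ``paper's proof'' to compare against; your write-up would serve as the missing verification.

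A few remarks on what you do. The normal-form reduction for the first claim is clean and the label match $|\phi_j\rangle=|\psi_{2j\bmod 5}\rangle$ is a nice observation tying back to the UPB construction. For the second claim your inner-product table is correct (I checked: the five orthogonal pairs are exactly $\{0,2\},\{2,4\},\{4,1\},\{1,3\},\{3,0\}$ and the five nonzero overlaps are $x^2,\,y,\,y,\,x/\sqrt{1+x^2},\,x/\sqrt{1+x^2}$), and your determinant formula $\det(2I-M)=-x^4y^2/(1+x^2)$ is right; the cofactor computation does telescope as you say, with the $(2,2)$-minor simplifying to $x^4y^2/(1+x^2)$ after using $1-x^4=y^2(1+x^2)$. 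Your trace-plus-determinant parity argument for $\lambda_{\max}>2$ is sound. Finally, your caveat that ``nonzero $\theta$'' must really mean $\sin\theta\cos\theta\neq 0$ (else $N=0$ or vectors coincide) is a genuine correction to the paper's phrasing and worth keeping.
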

Thus for each value of $\theta$ we thus have a set of contextual
vector $\vert \alpha_j\rangle $ and a corresponding UPB
$\vert \alpha_j\rangle \otimes
\vert \beta_j \rangle ,\,\, (j=0,\cdots 4)$, establishing a
connection between contextuality of the vectors and the corresponding UPB.

For any UPB, we have a corresponding bound entangled state
in its orthogonal complement.  Although all UPBs in
$\mathbb{C}^3 \otimes \mathbb{C}^3$ share identical
orthogonality graph as Fig.~\ref{orthogonality_upb}, the
corresponding bound entangled states differ in their Linear
Entropy of Entanglement (LEE).  For a bipartite state $\rho
\in \mathcal{H}_A \otimes \mathcal{H}_B$, the LEE is defined
as~\cite{Geza2015}
\begin{equation}
E_l(\rho) = \min_{\{p_i,|\psi_i\rangle\}}
\sum_i S_l(\operatorname{Tr}_B[|\psi_i\rangle\langle\psi_i|]),
\quad
S_l(\rho)=1-\operatorname{Tr}[\rho^2],
\end{equation}
where the minimization is over all pure-state decompositions
$\rho=\sum_i p_i |\psi_i\rangle\!\langle\psi_i|$.
In order to explain the variation in LEE, we consider the associated
\textit{orthogonal
representation} (OR) $\{\ket{\alpha_j}\}$ of the graph $C_5$ underlying each
UPB.
For a fixed OR $\{\ket{v_i}\}$ satisfying
$\langle v_i | v_j \rangle = 0$ whenever $i$ and $j$ are
\textit{adjacent in}
complement graph $\overline{G}$
(i.e., nonadjacent in $G$), we define the associated
\textit{contextual strength} $\mathcal{C}$ as
\begin{equation}
\mathcal{C}(G,\{v_i\}) = \max_{|\psi\rangle}
\sum_{i\in V} |\langle \psi | v_i\rangle|^2,
\end{equation}
where the maximization is over the handle vector $\ket{\psi}$.
Optimizing $\mathcal{C}(G,\{v_i\})$ further over all such valid ORs
$\{|v_i\rangle\}$ yields the Lov\'asz number
$\vartheta(G)$~\cite{Lovasz1979}:
\begin{equation}
\vartheta(G) =
\max_{\{|\psi\rangle, |v_i\rangle\}}
\sum_{i\in V} |\langle \psi | v_i\rangle|^2.
\end{equation}
	For $G=C_5$, $\vartheta(C_5)=\sqrt{5}$ gives the maximal
quantum violation of the KCBS inequality, whose classical
bound is given by the independence number $\alpha(G)=2$.
Graphs satisfying $\vartheta(G)>\alpha(G)$ are referred to
as \textit{quantum contextual graphs}(QCGs)~\cite{Tarrida2013}.  UPBs
constructed from vector sets $\{\ket{\alpha_j}\}$ with
$\mathcal{C}(C_5,\{\alpha_j\})$ approaching the quantum
optimum $\vartheta(C_5)=\sqrt{5}$ yield bound entangled
states of higher LEE, while those with smaller
$\mathcal{C}(C_5,\{\alpha_j\})$ exhibit weaker entanglement.
This establishes a quantitative correspondence between the
contextual strength of the underlying vectors and the
entanglement strength of the bound entangled state
associated to the UPB.
\begin{table}[t!]
\centering
\renewcommand{\arraystretch}{1.2}
\begin{tabular}{|c|c|c|c|}
\hline
$\theta$ & UPB Type &
$\mathcal{C}(C_5,\{\alpha_j(\theta)\})$ & LEE \\
\hline
$\cos^{-1}\!\frac{\sqrt{5}-1}{2}$ & Pyramid &
$\vartheta(C_5) = \sqrt{5}$  &
0.07295 \\
$3\pi/4$ & Tiles & 2.2287 & 0.06519 \\
$\pi/3$ &  & 2.2254 & 0.06335 \\
$\pi/6$ &  & 2.1641 & 0.01278 \\
$\pi/12$ & & 2.0590 & 0.00029 \\
\hline
\end{tabular}
\caption{
Contextual strength $\mathcal{C}(C_5,\{\alpha_j(\theta)\})$ and 
LEE
for various choices of the parameter $\theta$.  The
Pyramid ($\theta=\cos^{-1}{(\sqrt{5}-1})/2$)
and Tiles ($\theta=3\pi/4$) cases exhibit the
	highest contextual strengths and entanglement values.
}
\label{f_value}
\end{table}
For the set of vectors $\vert \alpha_j\rangle$ given
in Eqn.~\ref{one-parameter} we display the quantities
$\mathcal{C}(C_5,\{\alpha_j(\theta)\})$ LEE for various values of
the parameter $\theta$ in Table~\ref{f_value}. We observe a clear correlation
between the contextuality strength and LEE with Pyramid UPBs
having the highest value of LEE corresponding to the maximum
possible contextuality strength of $\sqrt{5}$.

Further, the correspondence between the KCBS vectors and
the Pyramid UPB vectors can be extended beyond the pentagon case.
In order to proceed, let us define the Generalized Pyramid
({\it GenPyramid}) UPB and the
corresponding Generalized KCBS vectors.
Consider a set of vectors in $\mathbb{C}^3$ :
\begin{eqnarray}
\label{genpyramid_vectors}
&&\vert {v_j}\rangle \equiv N_p\left[\cos\left({2\pi
j}/{p}\right), \,
\sin\left({2\pi j}/{p}\right),\, h_p\right], \nonumber \\
&&N_p=({1+|\cos{2 \pi t}/{p}|})^{-\frac{1}{2}}, \quad h_p =
({-\cos{2\pi t}/{p}})^{\frac{1}{2}}, \nonumber \\
&&t\,\,\mbox{(integer)} \quad  s.t. \quad \pi/2 \leq 2\pi
t/p \leq \pi, \nonumber \\
&&p=2m+1\,\, \mbox{(prime)}\quad \mbox{and}\quad j=0,\cdots
p-1.
\end{eqnarray}
The {\it GenPyramid} UPB for $m$ parties with Hilbert space
$\bigotimes_{k=1}^{m} \mathbb{C}^3$ is then constructed
from these vectors as:
\begin{equation}
\label{genpyramid_upb}
\vert {p_j}\rangle  = \vert{v_j}\rangle \otimes
     {\vert{v}}_{2j\!\!\!\!\mod p}\rangle \otimes
\cdots \otimes{\vert{v}}_{kj\!\!\!\!\mod p}\rangle, \,\, j = 0, \dots 2m.
\end{equation}
For \((m, t) = (2, 2)\), we obtain the
Pyramid UPB.
\begin{obs}
Although the original construction requires $p$ to be prime,
we find that valid UPBs also exist for certain non-prime dimensions.
In particular, for $(m,t)=(4,3)$ and $(m,t)=(12,10)$, the corresponding
values $p=9$ and $p=25$ yield UPBs.
A necessary condition for such non-prime cases appears to be that
$p$ must be an odd perfect square.
\end{obs}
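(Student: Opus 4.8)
The argument rests on a single orthogonality criterion for the generating vectors of Eq.~\eqref{genpyramid_vectors}. Expanding the inner product gives $\langle v_a\vert v_b\rangle=N_p^{2}\big[\cos\!\big(2\pi(a-b)/p\big)-\cos\!\big(2\pi t/p\big)\big]$, so $\vert v_a\rangle\perp\vert v_b\rangle$ exactly when $a-b\equiv\pm t\pmod p$. Hence $\vert p_j\rangle\perp\vert p_{j'}\rangle$ in Eq.~\eqref{genpyramid_upb} iff some party $k\in\{1,\dots,m\}$ has $k(j-j')\equiv\pm t\pmod p$, and a product vector $\bigotimes_k\vert w_k\rangle$ is orthogonal to every $\vert p_j\rangle$ iff the sets $S_k:=\{j:\langle w_k\vert v_{kj\bmod p}\rangle=0\}$ cover $\{0,\dots,p-1\}$. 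The plan is to analyse these two conditions through $\gcd(d,p)$ with $d=j-j'$ and through $\gcd(k,p)$, respectively.

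For the two stated cases one first checks \emph{orthogonality}: for each nonzero $d$ one must exhibit $k\le m$ with $kd\equiv\pm t\pmod p$. If $\gcd(d,p)=1$, take $k\equiv t\,d^{-1}$ and, if it exceeds $m$, replace it by $p-k$, which then solves $kd\equiv-t$; if $g:=\gcd(d,p)>1$, the congruence is solvable precisely when $g\mid t$, which holds because $p=q^{2}$ here and $q\mid t$ (namely $3\mid3$ for $p=9$ and $5\mid10$ for $p=25$), whereupon one solves the reduced congruence modulo $p/g$. For \emph{unextendibility} the parties $k$ with $\gcd(k,p)>1$ are the crux. A party with $\gcd(k,p)=1$ sees all $p$ distinct vectors $\vert v_r\rangle$, and since three distinct points on a circle are never collinear while $h_p\neq0$, any three $\vert v_r\rangle$ are independent in $\mathbb{C}^{3}$, so $\vert S_k\vert\le2$. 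A party with $\gcd(k,p)=q$ sees only the $p/q=q$ distinct vectors $\{\vert v_0\rangle,\vert v_q\rangle,\dots\}$, which (rescaling indices by $q$ and using $q\mid t$) carry the orthogonality graph $C_q$ in $\mathbb{C}^{3}$ — $K_3$ for $q=3$, $C_5$ for $q=5$ — so such a party annihilates at most two of those vectors, hence at most $2q$ indices, only in a rigid union-of-fibres pattern. Running the covering count over $k=1,\dots,m$, where for $p=q^{2}$ exactly the multiples of $q$ up to $m$ are the heavy parties, and doing the short finite case analysis for $p=9$ and $p=25$, one verifies that $\{0,\dots,p-1\}$ cannot be covered, establishing unextendibility.

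For the \emph{necessary condition}, suppose the family of Eq.~\eqref{genpyramid_upb} is an orthogonal set. By the criterion, for every $d\neq0$ there is $k$ with $kd\equiv\pm t\pmod p$; since $g:=\gcd(d,p)$ divides $kd$ and $p$, it divides $\pm t\bmod p$, so $g\mid t$. Letting $d$ realise each divisor of $p$ in turn, \emph{every} proper divisor of $p$ divides $t$, so the least common multiple of the proper divisors of $p$ divides $t<p$. Were $p$ to have two distinct prime factors this lcm would equal $p$, which is impossible; hence $p=q^{a}$ is a prime power with $q^{a-1}\mid t$, and combined with $\pi/2\le2\pi t/p\le\pi$, i.e.\ $p/4\le t\le p/2$, this forces $t=c\,q^{a-1}$ with $\lceil q/4\rceil\le c\le\lfloor q/2\rfloor$. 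Imposing unextendibility in addition should eliminate $a\ge3$: then there are parties divisible by $q^{2}$ and $\le m$, each seeing only $q^{a-2}$ distinct vectors and hence able to annihilate $\sim q^{2}$ times as many indices, so the covering count no longer falls short of $p$; discarding those cases leaves $a=2$, i.e.\ $p$ an odd perfect square.

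I expect the unextendibility bookkeeping — both in the forward direction and in ruling out $a\neq2$ — to be the main obstacle. One must show that a party seeing a $C_{p/q}$ configuration in $\mathbb{C}^{3}$ genuinely cannot cover its allotted share of $\{0,\dots,p-1\}$, combining the general position of the vectors with the rigidity of the fibres, and that the combined count over all $m$ parties falls short of $p$ precisely when $p$ is an odd square. This is a combinatorial-geometric estimate rather than a single algebraic identity, so it is where the real work lies.
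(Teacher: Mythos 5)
First, a point of reference: the paper offers no proof of this Observation at all --- it is stated as a computational finding (``we find \ldots'', ``appears to be'') --- so there is no argument of the authors' to compare yours against, and your proposal is an attempt to supply a proof the paper does not contain. Your orthogonality criterion is right: $\langle v_a|v_b\rangle\propto\cos\!\big(2\pi(a-b)/p\big)-\cos\!\big(2\pi t/p\big)$, so $|v_a\rangle\perp|v_b\rangle$ iff $a-b\equiv\pm t\pmod p$, and your check of pairwise orthogonality for $(p,t)=(9,3)$ and $(25,10)$ goes through. The fatal problem is the step you yourself flag as ``where the real work lies.'' You assert that a ``short finite case analysis'' shows the sets $S_k$ cannot cover $\{0,\dots,p-1\}$, but you never perform it, and when one does perform it the conclusion is the opposite. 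For $p=9$, $t=3$, $m=4$, party $3$ sees only $v_0,v_3,v_6$, which are \emph{mutually} orthogonal (your $K_3$), hence an orthonormal basis of $\mathbb{C}^3$. Take $w_3=v_0$: then $S_3=\{j:j\not\equiv 0\pmod 3\}$, leaving only $\{0,3,6\}$. Take $w_1=v_6$, which is orthogonal to $v_0$ and $v_3$ and so covers $j=0,3$; take $w_2=v_6$, which is orthogonal to $v_{2\cdot 6\bmod 9}=v_3$ and so covers $j=6$. Hence $v_6\otimes v_6\otimes v_0\otimes w_4$, for any nonzero $w_4$, is a product state orthogonal to all nine states of Eq.~\eqref{genpyramid_upb}: the set is \emph{extendible}. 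An analogous construction works for $p=25$ using parties $5$ and $10$ (each sees only $v_0,v_5,v_{10},v_{15},v_{20}$; choosing $w_5\perp v_5,v_{20}$ and $w_{10}\perp v_0,v_{20}$ leaves only $j\equiv 3\pmod 5$ uncovered, which three coprime parties then absorb). So your covering argument cannot be completed as claimed; carried through honestly, your own framework refutes the stated Observation for this construction rather than proving it, which means either the Observation rests on a different construction or verification than what is written, or it is simply not provable.

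The necessary-condition half inherits this defect and has one of its own. Your deduction that pairwise orthogonality forces $p=q^a$ with $q^{a-1}\mid t$ is correct and genuinely informative, but it does not isolate perfect squares: $p=27$, $t=9$ satisfies all of your orthogonality constraints together with $\pi/2\le 2\pi t/p\le\pi$. The proposed exclusion of $a\neq 2$ leans entirely on the same unextendibility bookkeeping that already fails at $a=2$. Before developing this further you should either determine what the authors actually verified numerically (possibly only pairwise orthogonality, or a modified choice of party multipliers), or confront the explicit extending product state above.
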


To parallel the above generalization on the UPB side,
let us extend the KCBS construction to arbitrary odd-cycle scenarios.
The resulting family of \textit{Generalized KCBS vectors}
provides the natural contextual counterpart to the
{\it GenPyramid} UPB, with each set corresponding to an odd cycle
graph $C_n$.
Let us define the vectors:
\begin{eqnarray}
&&|u_j\rangle \equiv \left[\cos\phi,\; \sin\phi\; \cos\theta_j,
\;\sin\phi\;
\sin\theta_j\right]^T
\nonumber  \\
&&\mbox{where}\quad
\theta_j = {j \pi (n-1)}/{n}\quad
\mbox{for}\quad  j = 1, \cdots n
\nonumber \\
&&\cos^2\phi = \frac{\vartheta(C_n)}{n}
\quad \mbox{and} \quad
\vartheta(C_n) =  \frac{n\cos\frac{\pi}{n}}{1
+ \cos
\frac{\pi}{n} }.
\label{loor_cn}
\end{eqnarray}
The vectors $\{|u_j\rangle\}$
form a LOOR of the odd cycle graph $C_n$~\cite{Tarrida2013},
and the corresponding measurements
$M_j = 2|u_j\rangle\!\langle u_j| - \mathbb{I}$
achieve the maximal quantum violation of noncontextuality
inequalities for odd-$n$ cycle scenarios.
\begin{obs}
The {\it GenPyramid} UPB vectors~(\ref{genpyramid_vectors}) are
equivalent to Generalized KCBS vectors~(\ref{loor_cn}) when
$m = t$ and $n = 2m+1$. The equivalence
holds only in prime dimensions since a valid UPB exists only
when $2m+1$ is prime.
\end{obs}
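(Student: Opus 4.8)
The plan is to prove the claimed equivalence by substituting $m=t$ and $n=p=2m+1$ into both vector families, showing the resulting sets coincide up to one fixed orthogonal transformation of $\mathbb{C}^3$ together with a relabeling of the index set, and then to deduce the prime restriction from the fact that the $m=t$ GenPyramid vectors fail to form an orthonormal product set (hence cannot be a UPB) whenever $p$ is composite.

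First I would collapse the parameters. With $t=m$ and $p=2m+1=n$ one has $2\pi t/p=\pi(n-1)/n=\pi-\pi/n$, so $\cos(2\pi t/p)=-\cos(\pi/n)$; hence in~(\ref{genpyramid_vectors}), $h_p^{2}=-\cos(2\pi t/p)=\cos(\pi/n)$ and $N_p^{2}=(1+\cos(\pi/n))^{-1}$. On the contextual side,~(\ref{loor_cn}) gives $\cos^{2}\phi=\vartheta(C_n)/n=\cos(\pi/n)/(1+\cos(\pi/n))=N_p^{2}h_p^{2}$, so $\sin^{2}\phi=1-\cos^{2}\phi=(1+\cos(\pi/n))^{-1}=N_p^{2}$; taking positive roots, $\cos\phi=N_p h_p$ and $\sin\phi=N_p$, whence $|u_j\rangle=N_p\,[\,h_p,\ \cos\theta_j,\ \sin\theta_j\,]^{T}$ with $\theta_j=j\pi(n-1)/n=2\pi(jt)/p$ (well defined modulo $2\pi$ because $n$ is odd). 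Then I would match the two sets: since $\gcd(t,p)=\gcd(m,2m+1)=1$, the map $j\mapsto k:=jt\bmod p$ is a bijection of $\mathbb{Z}_p$, so the vectors $|u_j\rangle$ run exactly over $N_p\,[\,h_p,\cos(2\pi k/p),\sin(2\pi k/p)\,]^{T}$ as $k$ ranges over $\mathbb{Z}_p$; applying the cyclic coordinate permutation $P:[a,b,c]^{T}\mapsto[c,a,b]^{T}$, a matrix in $SO(3)$ and hence a local unitary on $\mathbb{C}^3$, one gets $P|v_k\rangle=N_p\,[\,h_p,\cos(2\pi k/p),\sin(2\pi k/p)\,]^{T}=|u_{kt^{-1}\bmod p}\rangle$. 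Thus the GenPyramid building blocks~(\ref{genpyramid_vectors}) and the Generalized KCBS vectors~(\ref{loor_cn}) are the same configuration up to the single orthogonal map $P$ and the index relabeling $j\mapsto jt\bmod p$, which proves the first assertion.

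For the prime restriction I would exploit the orthogonality pattern that the first step produces: for $t=m$, $\langle v_j|v_l\rangle=N_p^{2}[\cos(2\pi(j-l)/p)+\cos(\pi/p)]=0$ iff $j-l\equiv\pm m\pmod p$, so in the $m$-partite UPB the $j$-th and $l$-th product states are orthogonal iff some party $k\in\{1,\dots,m\}$ satisfies $k(j-l)\equiv\pm m\pmod p$. If $p$ is composite, choose $\delta:=j-l$ with $q:=\gcd(\delta,p)>1$; since $q\mid p$ and $q$ is odd, $2m=p-1\equiv-1\pmod q$ forces $q\nmid m$ and $q\nmid(p-m)=(p+1)/2$, so $k\delta\bmod p$ — which always lies in the subgroup $q\mathbb{Z}_p$ — never equals $\pm m$ for any $k$; the two product states are then non-orthogonal and the family is not even an orthonormal product basis. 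For $p$ prime, by contrast, every nonzero $\delta$ is invertible modulo $p$ and exactly one party supplies the required orthogonality, recovering the UPB. Since the Generalized KCBS vectors exist for all odd $n$ whereas the $m=t$ GenPyramid family is a UPB only when $2m+1$ is prime, the equivalence is confined to prime dimensions. The algebra of the first two steps is routine; the step that needs care is this last one — in particular the elementary fact that $(p-1)/2$ is coprime to every prime divisor of $p$, which is exactly what destroys pairwise orthogonality when $p$ is composite.
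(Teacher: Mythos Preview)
The paper records this as an Observation and gives no proof, so there is nothing to compare your argument against; what you have written correctly supplies the verification the paper leaves implicit. Your parameter identification $\cos\phi=N_ph_p$, $\sin\phi=N_p$ together with the index bijection $j\mapsto jt\bmod p$ (well defined because $\gcd(m,2m+1)=1$) and the cyclic coordinate permutation of $\mathbb{C}^3$ establishes the equivalence of the two vector families, and your divisibility argument for composite $p$ --- pick $\delta$ with $q=\gcd(\delta,p)>1$, note $k\delta\bmod p$ is always a multiple of $q$ while $m=(p-1)/2$ and $p-m=(p+1)/2$ are each coprime to every divisor of $p$ --- cleanly shows that the $t=m$ product family fails even to be pairwise orthogonal, hence cannot be a UPB.
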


\noindent
\textit{From Contextuality to UPB.--} 
We are now ready to extend the connection
beyond equivalences. In order to construct a class of UPB
originating from contextuality framework, we also define the corresponding LOORs
of $\overline{C}_n$, the graphs
complement to $C_n$. These representations will form the
second local component of the set of product 
vectors in our
UPB. As demonstrated
in~\cite{Tarrida2013}, a LOOR of $\overline{C}_n$ in
dimension $n-2$ is realized by the vectors $\{v_{j,k}\}
\subset \mathbb{C}^{n-2}$ for $0 \le j \le n-1$ and $0 \le k
\le n-3$ defined as:
\begin{equation}
v_{j, k} =
\begin{cases}
	\sqrt{{\vartheta(\overline{C}_n)}/{n}} & \text{for } k
= 0, \\
T_{j, m} \cos{R_{j,m}} & \text{for } k = 2m - 1, \\
T_{j, m} \sin{R_{j,m}} & \text{for } k = 2m,
\end{cases}
\label{loor_bar_cn}
\end{equation}
where
\begin{align*}
&m =1\cdots\frac{n\!-\!3}{2},
\vartheta(\overline{C}_n) = \frac{1 +
\cos\frac{\pi}{n}}
{\cos\frac{\pi}{n}},
R_{j,m} = \frac{j(m+1)\pi}{n},\\
&T_{j,m} = (-1)^{j(m+1)} \sqrt{ \frac{2 \left( \cos
\tfrac{\pi}{n}
+ (-1)^{m+1} \cos\tfrac{(m+1)\pi}{n}
\right)}{n \cos
\tfrac{\pi}{n} } }.
\end{align*}
This LOOR of $\overline{C}_n$, together
with the LOOR of $C_n$ in $\mathbb{C}^3$, yield product
vectors in $\mathbb{C}^3 \otimes \mathbb{C}^{n-2}$ forming a
UPB. Since the construction is based on
orthogonal representations of cycle graph and its complement
which are QCGs,
we refer to the resulting UPB as \textit{GenContextual} UPB.
\begin{thm}
The product of LOORs of odd cycle graphs~(\ref{loor_cn})
and their complements~(\ref{loor_bar_cn}) given by vectors,
$|\phi_j\rangle= |u_j\rangle \otimes |v_j\rangle$ constitute a class
of minimal UPBs in $\mathbb{C}^3 \otimes \mathbb{C}^{n-2}$
for odd $n$.
\label{th1}
\end{thm}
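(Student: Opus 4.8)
The plan is to establish, for odd $n\ge 5$, the three defining features of a minimal UPB in $\mathbb{C}^{3}\otimes\mathbb{C}^{n-2}$: that $\{|\phi_j\rangle\}$ is a family of $n$ mutually orthogonal unit product vectors, that no product vector is orthogonal to all of them, and that its cardinality $n$ equals the standard lower bound $d_A+d_B-1=3+(n-2)-1$ for a UPB in $\mathbb{C}^{d_A}\otimes\mathbb{C}^{d_B}$~\cite{Bennett1999b,DiVincenzo2003}, the span being proper since $n<3(n-2)$. The product form and normalization are immediate, $|u_j\rangle$ and $|v_j\rangle$ being unit vectors. For orthogonality one uses that $E(C_n)$ and $E(\overline{C}_n)$ partition $E(K_n)$: writing $\langle\phi_j|\phi_k\rangle=\langle u_j|u_k\rangle\langle v_j|v_k\rangle$, for $j\ne k$ either $\{j,k\}\in E(C_n)$ so $\langle u_j|u_k\rangle=0$, or $\{j,k\}\in E(\overline{C}_n)$ so $\langle v_j|v_k\rangle=0$ (after the harmless relabelling aligning the index ranges of the two LOORs).

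The crux is unextendibility, which I would reduce to two linear-independence facts: (A) any three of the $|u_j\rangle$ are linearly independent, and (B) any $n-2$ of the $|v_j\rangle$ are linearly independent. Granting these, suppose a nonzero product vector $|a\rangle\otimes|b\rangle$ is orthogonal to every $|\phi_j\rangle$; then for each $j$ either $\langle a|u_j\rangle=0$ or $\langle b|v_j\rangle=0$, so the index set is covered by $S_A=\{j:\langle a|u_j\rangle=0\}$ and $S_B=\{j:\langle b|v_j\rangle=0\}$. By (A), three of the $|u_j\rangle$ cannot lie in the $2$-dimensional space $|a\rangle^{\perp}$, so $|S_A|\le2$; by (B), $n-2$ of the $|v_j\rangle$ cannot lie in the $(n-3)$-dimensional space $|b\rangle^{\perp}$, so $|S_B|\le n-3$. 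Then $n=|S_A\cup S_B|\le|S_A|+|S_B|\le n-1$, a contradiction, so the set is unextendible; with the orthonormality and the count $n=3+(n-2)-1$ this yields a minimal UPB.

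Claim (A) follows directly from~\eqref{loor_cn}: the determinant of the matrix whose columns are any three $|u_j\rangle$ is $\cos\phi\,\sin^{2}\phi$ times the $3\times3$ determinant with rows $(1,1,1)$, $(\cos\theta_j)$, $(\sin\theta_j)$, which vanishes only if two of the angles $\theta_j$ agree modulo $2\pi$ (three distinct points on the unit circle are never collinear). Here $\cos^{2}\phi=\vartheta(C_n)/n\in(0,\tfrac12)$, so $\cos\phi,\sin\phi\ne0$, and $\theta_j\equiv\theta_{j'}\pmod{2\pi}$ forces $j(n-1)\equiv j'(n-1)\pmod{2n}$, hence $n\mid(j-j')$ because $\gcd(n-1,2n)=2$ for odd $n$; so the $\theta_j$ are pairwise distinct and (A) holds.

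Claim (B) is the step I expect to be the main obstacle. I would study the Gram matrix $G=V^{\dagger}V$, where $V$ has columns $|v_j\rangle$. Using~\eqref{loor_bar_cn} and the identity $(-1)^{(j+k)(m+1)}=(-1)^{(j-k)(m+1)}$, one checks that $\langle v_j|v_k\rangle$ depends only on $j-k$, so $G$ is a real symmetric circulant; moreover $\langle v_j|v_k\rangle=0$ for every edge of $\overline{C}_n$, i.e.\ whenever $(j-k)\bmod n\in\{2,\dots,n-2\}$, so $G=\mathbb{I}+d\,(P+P^{-1})$ with $P$ the cyclic shift and $d=\langle v_0|v_1\rangle$ (necessarily $d\ne0$, else $\operatorname{rank}G=n$), whence the eigenvalues of $G$ are $1+2d\cos(2\pi k/n)$, $k\in\mathbb{Z}/n$. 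Since the LOOR spans $\mathbb{C}^{n-2}$~\cite{Tarrida2013}, $\operatorname{rank}G=n-2$, so exactly two eigenvalues vanish; as $G$ is positive semidefinite, $d<0$ would make $k=0$ the unique minimiser of the eigenvalue list (only one zero), so $d>0$, and then the two zeros sit at the two $k$ minimising $\cos(2\pi k/n)$, namely $k=(n-1)/2$ and $k=(n+1)/2$ (consistently with $\lambda_{\max}(G)=\vartheta(\overline{C}_n)$, which gives $d=1/(2\cos\tfrac\pi n)$). Hence $\ker V=\ker G=\operatorname{span}\{f_{(n-1)/2},f_{(n+1)/2}\}$ with $f_k=(\omega^{jk})_{j=0}^{n-1}$, $\omega=e^{2\pi i/n}$. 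These two modes differ by $1$, so the $j$-th entry of any $\alpha f_{(n-1)/2}+\beta f_{(n+1)/2}$ equals $\omega^{j(n-1)/2}(\alpha+\beta\omega^{j})$; as $j\mapsto\omega^{j}$ is a bijection of $\mathbb{Z}/n$ onto the $n$-th roots of unity, a nonzero such vector has at most one vanishing coordinate, i.e.\ every nonzero vector of $\ker V$ has support at least $n-1$, which is precisely (B). The delicate points are confirming the circulant, nearest-neighbour structure of $G$ from~\eqref{loor_bar_cn} (the bookkeeping of the signs $(-1)^{j(m+1)}$ and of the half-angle $\pi/n$) and pinning down that the two null modes are the adjacent frequencies $f_{(n\pm1)/2}$; the remaining steps are routine.
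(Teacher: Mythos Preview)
Your proof is correct and uses the same core strategy as the paper: the local orthogonality graphs are $C_n$ and $\overline{C}_n$, their union is $K_n$, and unextendibility follows from the counting bound $|S_A|+|S_B|\le 2+(n-3)<n$. The paper phrases this via a saturation lemma from~\cite{Fei2023}, but the underlying logic is identical to your direct argument. The difference is that the paper simply \emph{asserts} your claims (A) and (B)---that any three $|u_j\rangle$ span $\mathbb{C}^3$ and any $n-2$ of the $|v_j\rangle$ span $\mathbb{C}^{n-2}$---as properties ``ensured'' by the LOOR, whereas you actually prove them: a direct determinant computation for (A), and for (B) a circulant-Gram-matrix argument identifying $\ker V$ with the span of the two adjacent Fourier modes $f_{(n\pm1)/2}$, so that every nonzero kernel vector has at most one vanishing coordinate. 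Your treatment is therefore more self-contained; the paper's is shorter but tacitly relies on these general-position properties being inherited from the LOOR constructions of~\cite{Tarrida2013}.
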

\begin{proof}
Let $G =(V,E)$ be the orthogonality graph of a set of
vectors $\{|\varphi^{(1)}\rangle, \dots,
|\varphi^{(k)}\rangle\} \subset \mathbb{C}^d$,
where vertices represent state vectors and an edge exists
between two vertices if they are orthogonal.
A subset of vertices $W \subseteq
        V $ is said to be \textit{saturated} if the
corresponding vectors span $\mathbb{C}^d$; otherwise, it is
\textit{unsaturated}.
\begin{lem}[Sufficient and Necessary Conditions for a UPB~{\cite{Fei2023}}]
\label{lemma:suf_nec}
Let $\mathcal{U}$ be a set of $k$ product vectors in $\mathbb{C}^{d_1} \otimes
\mathbb{C}^{d_2} \otimes \cdots \otimes \mathbb{C}^{d_N}$, and let
$(G_m)_{m=1}^N$ denote the corresponding orthogonality graphs. Then
$\mathcal{U}$ forms a UPB if and only if:
\begin{enumerate}
\item $\bigcup_{m=1}^N G_m = K_k$;
\item For every $N$-tuple $(W_1, W_2, \ldots, W_N)$, where each $W_m$ is an
unsaturated set in $G_m$, we have $\bigcup_{m=1}^N W_m \neq V$.
\end{enumerate}
\end{lem}
In our case, the orthogonality graphs are defined as
follows: \( G_A = C_n \) (the cycle graph), associated with
the local vectors \( \{|u_j\rangle\} \subset \mathbb{C}^3
\), and \( G_B = \overline{C}_n \) (the complement of the
cycle graph), associated with \( \{|v_j\rangle\} \subset
\mathbb{C}^{n-2} \).

The first condition is met as $C_n \cup \overline{C}_n = K_n.$
The second condition requires that for every pair of subsets
of vertices
$(W_A, W_B)$, with $W_A \subseteq V(G_A)$ and $W_B \subseteq
V(G_B)$ being unsaturated in their respective graphs, the
union $W_A \cup W_B$ does not cover all of $V$.

For $G_A = C_n$, the LOOR in $\mathbb{C}^3$ ensures that any
three or more orthogonal vectors span $\mathbb{C}^3$.
Therefore, a subset $W_A$ is unsaturated only if $|W_A| \leq
2$. For $G_B = \overline{C}_n$, the LOOR in $\mathbb{C}^{n-2}$
implies that any set of $n - 2$ or more orthogonal vectors
spans $\mathbb{C}^{n-2}$. Hence, $W_B$ is unsaturated only
if $|W_B| \leq n - 3$.

Now, for any such pair of unsaturated sets $(W_A, W_B)$,
their union satisfies~\cite{Fei2023}: \[ |W_A \cup W_B| \leq |W_A| + |W_B|
\leq 2 + (n - 3) = n - 1 < n, \] so $\bigcup_{m= A,B} W_m
\neq V$, and thus does not cover $V$. \\
	Since both the required conditions are satisfied, the set
$\{|\phi_j\rangle\}$ forms a UPB.
\end{proof}

For an $m$-partite Hilbert space $\mathbb{C}^d =
\bigotimes_{k=1}^m \mathbb{C}^{d_k}$, the cardinality $n$ (number of vectors) of
any UPB satisfies the lower bound~\cite{Bennett1999b}
\begin{equation}
n \geq \sum_{i=1}^m (d_i - 1) + 1.
\end{equation}
Alon and Lov\'asz later showed that when all local dimensions $d_i$ are odd,
this bound is saturated~\cite{Alon2001}, and such UPBs are referred to as
\textit{minimal}.
The \textit{GenContextual} UPB in $\mathbb{C}^3 \otimes \mathbb{C}^{d_2}$ with
odd $d_2$ satisfies this condition, having cardinality $n = d_2 + 2$, and
therefore constitutes a minimal UPB.

The most notable feature of the \textit{GenContextual} UPB is that it serves as
the canonical orthogonality graph for all minimal UPBs in $\mathbb{C}^3 \otimes
\mathbb{C}^{n-2}$ when $n$ is odd. Any minimal UPB necessarily in $\mathbb{C}^3
\otimes \mathbb{C}^{n-2}$ for odd $n$, exhibits the same orthogonality
structure as \textit{GenContextual}.

\begin{defn}[\textbf{Graph Equivalence of
UPBs}~\cite{Lovasz2009}] 
\label{def:Graph_Equivalence} Let $\mathbb{C}^d =
\bigotimes_{k=1}^m \mathbb{C}^{d_k}$ be an $m$ partite
Hilbert space, and let $A$ and $B$ be two UPBs in
$\mathbb{C}^d$ with orthogonality graphs $(A, A \times A,
c_1)$ and $(B, B \times B, c_2)$, respectively. We say that
$A$ and $B$ are \textit{graph equivalent}, denoted $A \sim B$,
if there exists a permutation $\Pi : A \to B$ such that
\[
c_1(u, v) = c_2(\Pi(u), \Pi(v))
\quad \text{for all } u, v \in A.
\]
Here, $c_i : S \times S \to \mathcal{P}(\{1, 2, \dots, m\})$
(for $i = 1,2$) are
the colorings of the edges defined by
\[
c_i(|\psi\rangle, |\phi\rangle) = \left\{ k \,\middle|\,
|\psi^{(k)}\rangle
\perp |\phi^{(k)}\rangle \right\}.
\]
\end{defn}

\begin{lem}\label{graph_equivalence_lemma}
Let $G$ be a 2 connected, 2 regular graph with $n \geq 3$ vertices. Then $G$ is
an $n$-cycle.
        \label{lem2}
\end{lem}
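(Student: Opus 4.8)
The plan is to use the classical fact that a $2$-regular graph is a disjoint union of cycles and that connectedness then forces it to be a single cycle; the $2$-connectedness hypothesis is only needed insofar as it implies connectedness. First I would record the trivial consequences of the hypotheses: $G$ is connected (being $2$-connected), and since every vertex has degree exactly $2$, the handshake identity gives $|E(G)| = n$.

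Next I would exhibit a cycle in $G$ by a greedy walk. Pick any vertex $v_0$ and follow an incident edge to a neighbour $v_1$; since $\deg(v_1) = 2$, there is a unique edge at $v_1$ other than $v_0 v_1$, which I follow to $v_2$, and I continue this way, always leaving each newly reached vertex along its ``other'' incident edge. Because $V(G)$ is finite, the walk must eventually revisit some vertex; let $v_i$ be the first repeat. I would argue $v_i = v_0$: every intermediate vertex $v_j$ with $1 \le j \le i-1$ already has both of its incident edges used up by the walk (the edge by which it was entered and the edge by which it was left), so the walk cannot return to it; the only vertex with a still-unused incident edge available for a ``return'' is $v_0$, which was left along a single edge. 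Hence the walk closes up into a cycle $C = v_0 v_1 \cdots v_{i-1} v_0$.

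Finally I would show $C$ is a connected component of $G$: each vertex on $C$ already has both of its two incident edges lying on $C$, so $2$-regularity leaves no edge joining a vertex of $C$ to a vertex outside $C$. Since $G$ is connected, this forces $V(C) = V(G)$, so $C$ is a spanning cycle and $G = C_n$. I do not expect a genuine obstacle; the only point that requires care is the claim that the first repeated vertex of the greedy walk is the start vertex (equivalently, that the walk traces a cycle rather than a ``lollipop''), which is precisely where $2$-regularity is invoked, since no intermediate vertex has a spare incident edge.
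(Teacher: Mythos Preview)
Your proof is correct and follows the same approach as the paper: both argue that a finite $2$-regular graph decomposes into disjoint cycles and that connectedness (a consequence of $2$-connectedness) forces there to be exactly one. The only difference is expository: the paper simply cites the decomposition fact, whereas you re-derive it in the connected case via the explicit greedy-walk argument; your observation that the full strength of $2$-connectedness is not used is also accurate.
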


\begin{proof}
A finite 2-regular graph is a disjoint union of cycles. Since $G$ is
2-connected, it is connected and remains connected after removal of any single
vertex. A disjoint union of multiple cycles is disconnected, which contradicts
the 2-connectedness assumption. Therefore, $G$ must consist of exactly one
cycle, i.e., $G \cong C_n$.
\end{proof}

\begin{thm}\label{th:main}
Let $\mathbb{C}^d = \mathbb{C}^3 \otimes \mathbb{C}^{\,d_2}$ be a bipartite
Hilbert space with $d_2$ odd.
If $A$ and $B$ are minimal UPBs in $\mathbb{C}^d$, each of cardinality $n = d_2
+ 2$,
then they are graph equivalent.
Consequently, every minimal UPB in $\mathbb{C}^3 \otimes \mathbb{C}^{\,d_2}$
is graph equivalent to the \textit{GenContextual} UPB.
\end{thm}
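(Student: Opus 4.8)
The plan is to show that minimality rigidly constrains the orthogonality data of each local factor, forcing the edge-coloured graph of the UPB to coincide with that of the \textit{GenContextual} construction, and then to produce the colour-preserving relabelling demanded by Definition~\ref{def:Graph_Equivalence}. Write $n=d_2+2$, and for a minimal UPB $A=\{|u_j\rangle\otimes|v_j\rangle\}_{j=1}^{n}$ in $\mathbb{C}^3\otimes\mathbb{C}^{d_2}$ let $G_A,G_B$ be the orthogonality graphs of $\{|u_j\rangle\}\subset\mathbb{C}^3$ and $\{|v_j\rangle\}\subset\mathbb{C}^{d_2}$; by Lemma~\ref{lemma:suf_nec}(1) we have $G_A\cup G_B=K_n$.

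The first step is to deduce that both local families are in general position: no three $|u_j\rangle$ are linearly dependent (so every three span $\mathbb{C}^3$), and no $d_2$ of the $|v_j\rangle$ are linearly dependent (so every $d_2$ span $\mathbb{C}^{d_2}$). Indeed, if three $|u_j\rangle$ were dependent, that $3$-set $W_A$ would be unsaturated in $G_A$ while its complement $W_B$, of size $n-3<d_2$, is automatically unsaturated in $G_B$; then $W_A\cup W_B=V$, contradicting Lemma~\ref{lemma:suf_nec}(2). The symmetric argument — a dependent $d_2$-set of $|v_j\rangle$ paired with its $2$-element complement, which cannot span $\mathbb{C}^3$ — rules out degeneracies on the second factor. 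General position then bounds degrees: $|u_j\rangle^{\perp}$ has dimension $2$, so $\deg_{G_A}(j)\le 2$; $|v_j\rangle^{\perp}$ has dimension $d_2-1$, so $\deg_{G_B}(j)\le d_2-1=n-3$. Because $G_A\cup G_B=K_n$ forces $\deg_{G_A}(j)+\deg_{G_B}(j)\ge n-1$ while the degree bounds give $\deg_{G_A}(j)+\deg_{G_B}(j)\le 2+(n-3)=n-1$, every inequality is tight for each $j$: $G_A$ is $2$-regular, no edge is doubly coloured, and hence $G_B=\overline{G_A}$.

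The main obstacle is to rule out $G_A$ being a disjoint union of several cycles. Suppose $G_A$ had components $V_1,\dots,V_r$ with $r\ge 2$ and $|V_i|=a_i$ (necessarily $a_i\ge 3$). For $i\ne j$ every cross pair is non-adjacent in $G_A$ and hence adjacent in $G_B$, so the subspaces $U_i=\mathrm{span}\{|v_k\rangle:k\in V_i\}\subseteq\mathbb{C}^{d_2}$ are mutually orthogonal. Fixing $i$, choose a spanning set of $\mathbb{C}^{d_2}$ consisting of all $a_i$ vectors indexed by $V_i$ together with $d_2-a_i$ further vectors from $\bigcup_{j\ne i}V_j$ — possible since $a_i\le n-3<d_2$ and every $d_2$-subset spans by general position. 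In the orthogonal splitting this forces $d_2\le\dim U_i+(d_2-a_i)$, i.e.\ $\dim U_i\ge a_i$, and since $\dim U_i\le a_i$ trivially, $\dim U_i=a_i$. Summing, $n=\sum_i a_i=\sum_i\dim U_i\le d_2=n-2$, a contradiction. Hence $G_A$ is connected; being connected and $2$-regular it is $2$-connected, so Lemma~\ref{lem2} gives $G_A\cong C_n$ and $G_B\cong\overline{C_n}$. Consequently the colouring of $A$ assigns $\{1\}$ to the edges of an $n$-cycle on its vertex set and $\{2\}$ to all remaining edges, and no edge carries colour $\{1,2\}$.

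Finally, the same analysis applies verbatim to a second minimal UPB $B$. Picking cyclic orderings of the vertex sets of $A$ and $B$ that realise their respective colour-$\{1\}$ $n$-cycles and letting $\Pi$ match them position by position, $\Pi$ carries the colour-$\{1\}$ cycle of $A$ onto that of $B$, hence also the complementary edges onto the complementary edges; since only the colours $\{1\}$ and $\{2\}$ occur, $c_1(u,v)=c_2(\Pi u,\Pi v)$ for all $u,v$, i.e.\ $A\sim B$. As the \textit{GenContextual} UPB is itself a minimal UPB of cardinality $n$ in $\mathbb{C}^3\otimes\mathbb{C}^{d_2}$ by Theorem~\ref{th1}, every minimal UPB in this space is graph equivalent to it.
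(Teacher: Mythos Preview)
Your proof is correct and follows the same overall route as the paper: establish that the party-1 orthogonality graph of any minimal UPB in $\mathbb{C}^3\otimes\mathbb{C}^{d_2}$ is an $n$-cycle (hence the party-2 graph is its complement), and then match the two cycles by a permutation. The paper simply \emph{asserts} the key facts (each edge of $K_n$ receives exactly one colour, the green factor is $2$-regular and connected), whereas you supply the missing justifications---general position of both local families via Lemma~\ref{lemma:suf_nec}(2), the degree-count forcing $2$-regularity and $G_B=\overline{G_A}$, and the dimension-counting argument excluding a disjoint union of cycles---so your argument is strictly more complete than the paper's own.
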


\begin{proof}
Let $A$ and $B$ be minimal UPBs in $\mathbb{C}^3 \otimes
\mathbb{C}^{d_2}$, each of cardinality $n = d_2 +
2$. Consider the complete graph $K_n$ whose vertices
represent the states. Color an edge green if the
corresponding states are orthogonal on party 1
($\mathbb{C}^3$), and red if they are orthogonal on
party 2 ($\mathbb{C}^{d_2}$). By the properties of
minimal UPBs, every edge of $K_n$ receives exactly
one color, partitioning the edge set into two
factors.

The green factor (party-1 orthogonality graph) is 2-regular
and connected, hence by
Lemma~\ref{graph_equivalence_lemma} it is an
$n$-cycle. Since $n$ is odd, this is an odd cycle.
The red factor (party-2 orthogonality graph) is the
complement of the green factor in $K_n$.

Choose a permutation $\Pi: A \to B$ that maps the party-1
cycle of $A$ to that of $B$. This permutation
preserves the edge coloring: if $u,v \in A$ are
orthogonal on party 1 (green edge), then
$\Pi(u),\Pi(v)$ are adjacent in $B$'s cycle and thus
orthogonal on party 1; if orthogonal on party 2 (red
edge), they are non-adjacent in $A$'s cycle, so
$\Pi(u),\Pi(v)$ are non-adjacent in $B$'s cycle and
thus orthogonal on party 2. Therefore $c_1(u,v) =
c_2(\Pi(u),\Pi(v))$ for all $u,v \in A$, and $A \sim
B$.
\end{proof}
It is well established that all UPBs are indistinguishable
under LOCC giving rise to \textit{nonlocality without
entanglement}. However, certain UPBs remain
indistinguishable under even more general measurement
classes.  Cohen conjectured that all known minimal UPBs are
indistinguishable under the topological closure of LOCC,
denoted by $\overline{\text{LOCC}}$~\cite{Cohen2023}.
Specific examples such as the Tiles, Shifts, and {\it
GenTiles2} UPBs have been shown to be indistinguishable
under $\overline{\text{LOCC}}$~\cite{Cohen2022}. However,
these UPBs can be distinguished by separable (SEP)
measurements, a class strictly larger  than
$\overline{\text{LOCC}}$. Although most UPBs are
distinguishable under SEP, the Feng UPB in $\mathbb{C}^4
\otimes \mathbb{C}^4$ is a known exception and remains
SEP-indistinguishable~\cite{Bandyopadhyay2015}. An even
larger class corresponds to measurements that are positive
under partial transpose (PPT), under which all UPBs can be
perfectly distinguished. Therefore we have the following
hierarchy:
\begin{equation}
\text{LOCC} \subsetneq \overline{\text{LOCC}} \subsetneq
\text{SEP} \subsetneq
\text{PPT},
\end{equation}
In $\mathbb{C}^3
\otimes \mathbb{C}^3$, all UPBs are known to be
distinguishable by SEP measurements~\cite{DiVincenzo2003}.
Using semidefinite programming (SDP), one can verify that
both the \textit{GenTiles2} and \textit{GenContextual}
UPBs are SEP-distinguishable, following the standard
SDP framework for analyzing distinguishability of 
product bases~\cite{Bandyopadhyay2015}.
We conjecture, however, that the \textit{GenContextual} UPB,
being minimal, remains indistinguishable under
$\overline{\text{LOCC}}$ measurements.

\noindent
\textit{From UPB to contextuality.--}
So far  we established how contextual graphs can give rise
to UPBs.  We now examine the reverse direction and examine
how a given UPB can reveal an underlying contextual
structure.  In particular, we focus on the UPB introduced by
DiVincenzo \textit{et al.}~\cite{DiVincenzo2003}, known as
the \textit{QuadRes} UPB.

\begin{defn}[\textbf{Quadric Residues}~\cite{DiVincenzo2003}]
 Let $\mathbb{Z}_p$ be the group of integers
modulo $p$ and  $\mathbb{Z}_p^* = \mathbb{Z}_p \setminus \{0\}$. Then
the quadratic residues $Q_p$ are defined as:
\[
Q_p = \{q \in \mathbb{Z}_p^* \;\; \mbox{s.t.}~~\exists
~~x \in \mathbb{Z}. q \equiv x^2
\!\!\!\! \mod p\}
\]
and they forms a group under multiplication modulo $p$.
\end{defn}

Consider, for any given $d \in \mathbb{N}$ and $x \in
\mathbb{Z}_{2d-1}^* \setminus Q_{2d-1}$ such that,  $p =
2d-1 = 4t-1$  is prime for some \textit{t}. A function $Q :
\mathbb{Z}_p \to \mathbb{C}^d$ defines a set of states
\begin{equation}
\label{paley_vectors}
|Q(a)\rangle = \sqrt{N} |0\rangle + \sum_{q \in Q_p}
e^{i2\pi qa/p}
|2q\rangle
\end{equation}
where $N = \max(-\sum_{q \in Q_p} e^{i2\pi q/p}, 1 + \sum_{q
\in Q_p} e^{i2\pi q/p})$ is the normalization factor.  The
QuadRes UPB states are then defined as
\begin{equation}
|\psi_i\rangle =
|Q(i)\rangle \otimes |Q(ix)\rangle,
\end{equation}
for all $i \in \mathbb{Z}_p$. 
The vectors given in Equation~(\ref{paley_vectors})
corresponds to the LOOR of Paley graphs in prime dimensions,
establishing a deeper link between contextuality and the
structure of UPBs. Paley graphs in prime dimensions are
self complementary vertex transitive graphs. These
properties make the calculation of Lov\'asz number of such
Paley graphs simple.

As discussed in~\cite{Lovasz1979}, for any graph $G$ on $n$
vertices, the Lovász theta number of $G$ and its complement
$\overline{G}$ satisfy the inequality:
\begin{equation}
\vartheta(G)\, \vartheta(\overline{G}) \geq n.
\end{equation}
Moreover, if $G$ is vertex transitive, then equality holds:
\begin{equation}
\vartheta(G)\, \vartheta(\overline{G}) = n.
\end{equation}
\begin{obs}\label{theta_paley_graphs}
Let $P_q$ be a Paley graph. Then $\vartheta(P_q) = \sqrt{q}$.
\end{obs}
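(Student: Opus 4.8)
The plan is to obtain $\vartheta(P_q)=\sqrt q$ directly from the two structural features of Paley graphs already highlighted above---vertex transitivity and self complementarity---combined with Theorem~\ref{theta_product}. First I would record these two features explicitly for $P_q$, the Paley graph on the $q$ vertices of $\mathbb{F}_q$ with $a\sim b$ iff $a-b$ is a nonzero square. Vertex transitivity holds because the affine maps $x\mapsto \alpha x+\beta$ with $\alpha$ a nonzero square are graph automorphisms and act transitively on the vertex set. Self complementarity holds because, fixing any quadratic non-residue $\nu$, the map $x\mapsto \nu x$ carries square differences to non-square differences and vice versa, hence is a graph isomorphism $P_q\xrightarrow{\ \sim\ }\overline{P_q}$.

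Next I would apply Theorem~\ref{theta_product} with $G=P_q$. Since $P_q$ is vertex transitive on $n=q$ vertices, the equality case gives $\vartheta(P_q)\,\vartheta(\overline{P_q})=q$. Because $\vartheta(\cdot)$ is invariant under graph isomorphism and $\overline{P_q}\cong P_q$ by self complementarity, I would substitute $\vartheta(\overline{P_q})=\vartheta(P_q)$, so that $\vartheta(P_q)^2=q$. As $\vartheta$ of any graph with at least one vertex is at least $1>0$, taking the positive square root yields $\vartheta(P_q)=\sqrt q$, which is the claim.

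There is essentially no obstacle in the argument proper: the whole content is packaged into Theorem~\ref{theta_product}, and the remaining steps are one-line deductions once self complementarity of $P_q$ is in hand. The only point I would take care to state cleanly is the connection back to the UPB side---namely that the vectors of~\eqref{paley_vectors} really do constitute an orthogonal representation of $P_q$, so that this is the relevant Lov\'asz number. That identification rests on the standard quadratic Gauss sum evaluation of $\sum_{r\in Q_q} e^{2\pi i r k/q}$, whose value distinguishes whether $k$ is a residue or a non-residue and thereby reproduces the Paley adjacency pattern among the $|Q(a)\rangle$; with this recorded, $\vartheta(P_q)=\sqrt q$ follows with no further computation, and one also sees that the handle vector proportional to $|0\rangle$ saturates the Lov\'asz bound, confirming that~\eqref{paley_vectors} is a LOOR of $P_q$.
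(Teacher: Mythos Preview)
Your argument is correct and is exactly the route the paper intends: it states just before Theorem~\ref{theta_product} that Paley graphs are self-complementary and vertex transitive, then records Theorem~\ref{theta_product} and the Observation in succession, leaving the two-line deduction $\vartheta(P_q)\,\vartheta(\overline{P_q})=q$ together with $\overline{P_q}\cong P_q$ to the reader. Your additional remarks on the Gauss-sum verification that the vectors~\eqref{paley_vectors} form an orthogonal representation of $P_q$ and on the handle $|0\rangle$ saturating the bound go beyond what the paper spells out, but they are correct and usefully tie the computation back to the LOOR claim.
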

Hence, for every $q$, we know the value of $\vartheta(P_q)$.
As previously mentioned, in~\cite{Broere1988} it was proved
that, if $q$ is a square, then $\alpha(P_q) = \sqrt{q}$.
Thus, in this case $\vartheta(P_q)$ and $\alpha(P_q)$
coincide.  Moreover, when $q$ is a prime, $P_q$ is circulant
and the subgraph of $P_q$ induced by the neighborhood of the
vertex $0$ is also circulant~\cite{Magsino2019}.  The role
of circulant Paley graphs was previously explored in a
limited way in the context of the global exclusivity
principle~\cite{Cabello2013}.  The Paley graphs are also
noteworthy because they exhibit a higher degree of quantum
contextuality  compared to the odd cycle graphs in terms of
the ratio  $\vartheta(G)/\alpha(G)$. One can therefore
construct robust noncontextuality inequalities using paley
graphs which could be tested
experimentally~\cite{Cabello2008a,Cabello2008b,Marques2014}.
\begin{table}[h!]
\centering
\renewcommand{\arraystretch}{1.2}
\begin{tabular}{|c|c|c|c|c|c|c|}
\hline
$q$ & 5 & ~~9~~ & 13 & 17 & 25 & 29 \\
\hline
$\vartheta(P_q)$& $\sqrt{5}$ & 3 & $\sqrt{13}$ & $\sqrt{17}$
& 5 & $\sqrt{29}$ \\
\hline
$\alpha(P_q)$ &2 &3 &3 &3 &5 &4 \\
\hline
\end{tabular}
\caption{The Lov\'asz number
and independence number for some  Paley graphs $P_q$.
	\label{table_2}}
\end{table}
The Lov\'asz number and the independence number 
for Paley graphs with different $q$ values are shown in Table~\ref{table_2}.
The ration $\vartheta(G)/\alpha(G)$ is an important
indicator of the degree of contextuality.

\noindent
\textit{Conclusions.--}
In this work we have shown
that LOORs associated with quantum contextual graphs can be
used to construct a class of UPB and that certain class of
UPBs are naturally related to quantum contextual graphs,
bringing out intriguing new connections between quantum
contextuality and UPBs.  These findings extend earlier work
linking UPBs to Bell inequalities with no quantum violation,
which arise from the local orthogonality
principle~\cite{Sainz2014}, itself a special case of the
exclusivity principle Since
the contextuality framework employed here is also based on
the exclusivity principle, our results indicate a broader connection
between exclusivity-based constraints and the structure of
UPBs.

Several open questions remain. Notably, the precise
conditions under which the LOOR-UPB correspondence holds is
yet to be established, as many known UPBs appear unrelated
to contextuality.  Our work suggests several avenues for
future research. For instance, one could try to extend the
\textit{GenContextual} UPB construction to other dimensions,
following the framework of Yang
\textit{et~al.}~\cite{Yang2015}. It would also be worthwhile
to examine whether \textit{GenContextual} UPB can be
perfectly discriminated by LOCC using a maximally entangled
resource of dimension $\lceil \frac{m}{2}\rceil\otimes
\lceil \frac{m}{2}\rceil$ , as is the case for the {\it
GenTiles2} UPB~\cite{Cohen2008}, or if a higher degree of
entanglement is required.  While our analysis has focused on
odd dimensional systems, extending this framework to even
dimensional cases and to multipartite UPBs beyond three
parties remains an open problem.  Additionally, the role of
\textit{general orthogonal representations} and the associated
\textit{contextuality strength} in characterizing UPBs and
their entanglement properties warrants further
investigation.

\noindent
\textit{Acknowledgements.---}Gurvir acknowledges
Jaskaran Singh and A. J. Lo\'pez-Tarrida for
fruitful discussions.

\begin{thebibliography}{30}%
\makeatletter
\providecommand \@ifxundefined [1]{%
 \@ifx{#1\undefined}
}%
\providecommand \@ifnum [1]{%
 \ifnum #1\expandafter \@firstoftwo
 \else \expandafter \@secondoftwo
 \fi
}%
\providecommand \@ifx [1]{%
 \ifx #1\expandafter \@firstoftwo
 \else \expandafter \@secondoftwo
 \fi
}%
\providecommand \natexlab [1]{#1}%
\providecommand \enquote  [1]{``#1''}%
\providecommand \bibnamefont  [1]{#1}%
\providecommand \bibfnamefont [1]{#1}%
\providecommand \citenamefont [1]{#1}%
\providecommand \href@noop [0]{\@secondoftwo}%
\providecommand \href [0]{\begingroup \@sanitize@url \@href}%
\providecommand \@href[1]{\@@startlink{#1}\@@href}%
\providecommand \@@href[1]{\endgroup#1\@@endlink}%
\providecommand \@sanitize@url [0]{\catcode `\\12\catcode `\$12\catcode
  `\&12\catcode `\#12\catcode `\^12\catcode `\_12\catcode `\%12\relax}%
\providecommand \@@startlink[1]{}%
\providecommand \@@endlink[0]{}%
\providecommand \url  [0]{\begingroup\@sanitize@url \@url }%
\providecommand \@url [1]{\endgroup\@href {#1}{\urlprefix }}%
\providecommand \urlprefix  [0]{URL }%
\providecommand \Eprint [0]{\href }%
\providecommand \doibase [0]{https://doi.org/}%
\providecommand \selectlanguage [0]{\@gobble}%
\providecommand \bibinfo  [0]{\@secondoftwo}%
\providecommand \bibfield  [0]{\@secondoftwo}%
\providecommand \translation [1]{[#1]}%
\providecommand \BibitemOpen [0]{}%
\providecommand \bibitemStop [0]{}%
\providecommand \bibitemNoStop [0]{.\EOS\space}%
\providecommand \EOS [0]{\spacefactor3000\relax}%
\providecommand \BibitemShut  [1]{\csname bibitem#1\endcsname}%
\let\auto@bib@innerbib\@empty
\bibitem [{\citenamefont {Cabello}\ \emph {et~al.}(2010)\citenamefont
  {Cabello}, \citenamefont {Severini},\ and\ \citenamefont {Winter}}]{Csw2010}%
  \BibitemOpen
  \bibfield  {author} {\bibinfo {author} {\bibfnamefont {A.}~\bibnamefont
  {Cabello}}, \bibinfo {author} {\bibfnamefont {S.}~\bibnamefont {Severini}},\
  and\ \bibinfo {author} {\bibfnamefont {A.}~\bibnamefont {Winter}},\ }\href
  {https://arxiv.org/abs/1010.2163} {\bibinfo {title} {(non-)contextuality of
  physical theories as an axiom}} (\bibinfo {year} {2010}),\ \Eprint
  {https://arxiv.org/abs/1010.2163} {arXiv:1010.2163 [quant-ph]} \BibitemShut
  {NoStop}%
\bibitem [{\citenamefont {Cabello}\ \emph {et~al.}(2014)\citenamefont
  {Cabello}, \citenamefont {Severini},\ and\ \citenamefont {Winter}}]{Csw2014}%
  \BibitemOpen
  \bibfield  {author} {\bibinfo {author} {\bibfnamefont {A.}~\bibnamefont
  {Cabello}}, \bibinfo {author} {\bibfnamefont {S.}~\bibnamefont {Severini}},\
  and\ \bibinfo {author} {\bibfnamefont {A.}~\bibnamefont {Winter}},\ }\href
  {https://doi.org/10.1103/PhysRevLett.112.040401} {\bibfield  {journal}
  {\bibinfo  {journal} {Phys. Rev. Lett.}\ }\textbf {\bibinfo {volume} {112}},\
  \bibinfo {pages} {040401} (\bibinfo {year} {2014})}\BibitemShut {NoStop}%
\bibitem [{\citenamefont {Lovász}\ \emph {et~al.}(1989)\citenamefont
  {Lovász}, \citenamefont {Saks},\ and\ \citenamefont
  {Schrijver}}]{Lovasz1989}%
  \BibitemOpen
  \bibfield  {author} {\bibinfo {author} {\bibfnamefont {L.}~\bibnamefont
  {Lovász}}, \bibinfo {author} {\bibfnamefont {M.}~\bibnamefont {Saks}},\ and\
  \bibinfo {author} {\bibfnamefont {A.}~\bibnamefont {Schrijver}},\ }\href
  {https://doi.org/https://doi.org/10.1016/0024-3795(89)90475-8} {\bibfield
  {journal} {\bibinfo  {journal} {Linear Algebra and its Applications}\
  }\textbf {\bibinfo {volume} {114-115}},\ \bibinfo {pages} {439} (\bibinfo
  {year} {1989})},\ \bibinfo {note} {special Issue Dedicated to Alan J.
  Hoffman}\BibitemShut {NoStop}%
\bibitem [{\citenamefont {Shi}\ \emph {et~al.}(2023)\citenamefont {Shi},
  \citenamefont {Bai}, \citenamefont {Zhang}, \citenamefont {Zhao},\ and\
  \citenamefont {Chiribella}}]{Fei2023}%
  \BibitemOpen
  \bibfield  {author} {\bibinfo {author} {\bibfnamefont {F.}~\bibnamefont
  {Shi}}, \bibinfo {author} {\bibfnamefont {G.}~\bibnamefont {Bai}}, \bibinfo
  {author} {\bibfnamefont {X.}~\bibnamefont {Zhang}}, \bibinfo {author}
  {\bibfnamefont {Q.}~\bibnamefont {Zhao}},\ and\ \bibinfo {author}
  {\bibfnamefont {G.}~\bibnamefont {Chiribella}},\ }\href
  {https://doi.org/10.1103/PhysRevResearch.5.033144} {\bibfield  {journal}
  {\bibinfo  {journal} {Phys. Rev. Res.}\ }\textbf {\bibinfo {volume} {5}},\
  \bibinfo {pages} {033144} (\bibinfo {year} {2023})}\BibitemShut {NoStop}%
\bibitem [{\citenamefont {Bennett}\ \emph
  {et~al.}(1999{\natexlab{a}})\citenamefont {Bennett}, \citenamefont
  {DiVincenzo}, \citenamefont {Mor}, \citenamefont {Shor}, \citenamefont
  {Smolin},\ and\ \citenamefont {Terhal}}]{Bennett1999b}%
  \BibitemOpen
  \bibfield  {author} {\bibinfo {author} {\bibfnamefont {C.~H.}\ \bibnamefont
  {Bennett}}, \bibinfo {author} {\bibfnamefont {D.~P.}\ \bibnamefont
  {DiVincenzo}}, \bibinfo {author} {\bibfnamefont {T.}~\bibnamefont {Mor}},
  \bibinfo {author} {\bibfnamefont {P.~W.}\ \bibnamefont {Shor}}, \bibinfo
  {author} {\bibfnamefont {J.~A.}\ \bibnamefont {Smolin}},\ and\ \bibinfo
  {author} {\bibfnamefont {B.~M.}\ \bibnamefont {Terhal}},\ }\href
  {https://doi.org/10.1103/physrevlett.82.5385} {\bibfield  {journal} {\bibinfo
   {journal} {Physical Review Letters}\ }\textbf {\bibinfo {volume} {82}},\
  \bibinfo {pages} {5385–5388} (\bibinfo {year}
  {1999}{\natexlab{a}})}\BibitemShut {NoStop}%
\bibitem [{\citenamefont {Bennett}\ \emph
  {et~al.}(1999{\natexlab{b}})\citenamefont {Bennett}, \citenamefont
  {DiVincenzo}, \citenamefont {Fuchs}, \citenamefont {Mor}, \citenamefont
  {Rains}, \citenamefont {Shor}, \citenamefont {Smolin},\ and\ \citenamefont
  {Wootters}}]{Bennett1999a}%
  \BibitemOpen
  \bibfield  {author} {\bibinfo {author} {\bibfnamefont {C.~H.}\ \bibnamefont
  {Bennett}}, \bibinfo {author} {\bibfnamefont {D.~P.}\ \bibnamefont
  {DiVincenzo}}, \bibinfo {author} {\bibfnamefont {C.~A.}\ \bibnamefont
  {Fuchs}}, \bibinfo {author} {\bibfnamefont {T.}~\bibnamefont {Mor}}, \bibinfo
  {author} {\bibfnamefont {E.}~\bibnamefont {Rains}}, \bibinfo {author}
  {\bibfnamefont {P.~W.}\ \bibnamefont {Shor}}, \bibinfo {author}
  {\bibfnamefont {J.~A.}\ \bibnamefont {Smolin}},\ and\ \bibinfo {author}
  {\bibfnamefont {W.~K.}\ \bibnamefont {Wootters}},\ }\href
  {https://doi.org/10.1103/PhysRevA.59.1070} {\bibfield  {journal} {\bibinfo
  {journal} {Phys. Rev. A}\ }\textbf {\bibinfo {volume} {59}},\ \bibinfo
  {pages} {1070} (\bibinfo {year} {1999}{\natexlab{b}})}\BibitemShut {NoStop}%
\bibitem [{\citenamefont {DiVincenzo}\ \emph {et~al.}(2003)\citenamefont
  {DiVincenzo}, \citenamefont {Mor}, \citenamefont {Shor}, \citenamefont
  {Smolin},\ and\ \citenamefont {Terhal}}]{DiVincenzo2003}%
  \BibitemOpen
  \bibfield  {author} {\bibinfo {author} {\bibfnamefont {D.~P.}\ \bibnamefont
  {DiVincenzo}}, \bibinfo {author} {\bibfnamefont {T.}~\bibnamefont {Mor}},
  \bibinfo {author} {\bibfnamefont {P.~W.}\ \bibnamefont {Shor}}, \bibinfo
  {author} {\bibfnamefont {J.~A.}\ \bibnamefont {Smolin}},\ and\ \bibinfo
  {author} {\bibfnamefont {B.~M.}\ \bibnamefont {Terhal}},\ }\href
  {https://doi.org/10.1007/s00220-003-0877-6} {\bibfield  {journal} {\bibinfo
  {journal} {Communications in Mathematical Physics}\ }\textbf {\bibinfo
  {volume} {238}},\ \bibinfo {pages} {379–410} (\bibinfo {year}
  {2003})}\BibitemShut {NoStop}%
\bibitem [{\citenamefont {Cohen}(2008)}]{Cohen2008}%
  \BibitemOpen
  \bibfield  {author} {\bibinfo {author} {\bibfnamefont {S.~M.}\ \bibnamefont
  {Cohen}},\ }\href {https://doi.org/10.1103/PhysRevA.77.012304} {\bibfield
  {journal} {\bibinfo  {journal} {Phys. Rev. A}\ }\textbf {\bibinfo {volume}
  {77}},\ \bibinfo {pages} {012304} (\bibinfo {year} {2008})}\BibitemShut
  {NoStop}%
\bibitem [{\citenamefont {Cohen}(2022)}]{Cohen2022}%
  \BibitemOpen
  \bibfield  {author} {\bibinfo {author} {\bibfnamefont {S.~M.}\ \bibnamefont
  {Cohen}},\ }\href {https://doi.org/10.1103/PhysRevA.105.022207} {\bibfield
  {journal} {\bibinfo  {journal} {Phys. Rev. A}\ }\textbf {\bibinfo {volume}
  {105}},\ \bibinfo {pages} {022207} (\bibinfo {year} {2022})}\BibitemShut
  {NoStop}%
\bibitem [{\citenamefont {Cohen}(2023)}]{Cohen2023}%
  \BibitemOpen
  \bibfield  {author} {\bibinfo {author} {\bibfnamefont {S.~M.}\ \bibnamefont
  {Cohen}},\ }\href {https://doi.org/10.1103/PhysRevA.107.012401} {\bibfield
  {journal} {\bibinfo  {journal} {Phys. Rev. A}\ }\textbf {\bibinfo {volume}
  {107}},\ \bibinfo {pages} {012401} (\bibinfo {year} {2023})}\BibitemShut
  {NoStop}%
\bibitem [{\citenamefont {Bandyopadhyay}\ \emph {et~al.}(2015)\citenamefont
  {Bandyopadhyay}, \citenamefont {Cosentino}, \citenamefont {Johnston},
  \citenamefont {Russo}, \citenamefont {Watrous},\ and\ \citenamefont
  {Yu}}]{Bandyopadhyay2015}%
  \BibitemOpen
  \bibfield  {author} {\bibinfo {author} {\bibfnamefont {S.}~\bibnamefont
  {Bandyopadhyay}}, \bibinfo {author} {\bibfnamefont {A.}~\bibnamefont
  {Cosentino}}, \bibinfo {author} {\bibfnamefont {N.}~\bibnamefont {Johnston}},
  \bibinfo {author} {\bibfnamefont {V.}~\bibnamefont {Russo}}, \bibinfo
  {author} {\bibfnamefont {J.}~\bibnamefont {Watrous}},\ and\ \bibinfo {author}
  {\bibfnamefont {N.}~\bibnamefont {Yu}},\ }\href
  {https://doi.org/10.1109/tit.2015.2417755} {\bibfield  {journal} {\bibinfo
  {journal} {IEEE Transactions on Information Theory}\ }\textbf {\bibinfo
  {volume} {61}},\ \bibinfo {pages} {3593–3604} (\bibinfo {year}
  {2015})}\BibitemShut {NoStop}%
\bibitem [{\citenamefont {Kochen}\ and\ \citenamefont
  {Specker}(1967)}]{specker1967}%
  \BibitemOpen
  \bibfield  {author} {\bibinfo {author} {\bibfnamefont {S.}~\bibnamefont
  {Kochen}}\ and\ \bibinfo {author} {\bibfnamefont {E.~P.}\ \bibnamefont
  {Specker}},\ }\href {http://www.jstor.org/stable/24902153} {\bibfield
  {journal} {\bibinfo  {journal} {Journal of Mathematics and Mechanics}\
  }\textbf {\bibinfo {volume} {17}},\ \bibinfo {pages} {59} (\bibinfo {year}
  {1967})}\BibitemShut {NoStop}%
\bibitem [{\citenamefont {Cabello}\ \emph {et~al.}(2008)\citenamefont
  {Cabello}, \citenamefont {Filipp}, \citenamefont {Rauch},\ and\ \citenamefont
  {Hasegawa}}]{Cabello2008a}%
  \BibitemOpen
  \bibfield  {author} {\bibinfo {author} {\bibfnamefont {A.}~\bibnamefont
  {Cabello}}, \bibinfo {author} {\bibfnamefont {S.}~\bibnamefont {Filipp}},
  \bibinfo {author} {\bibfnamefont {H.}~\bibnamefont {Rauch}},\ and\ \bibinfo
  {author} {\bibfnamefont {Y.}~\bibnamefont {Hasegawa}},\ }\href
  {https://doi.org/10.1103/PhysRevLett.100.130404} {\bibfield  {journal}
  {\bibinfo  {journal} {Phys. Rev. Lett.}\ }\textbf {\bibinfo {volume} {100}},\
  \bibinfo {pages} {130404} (\bibinfo {year} {2008})}\BibitemShut {NoStop}%
\bibitem [{\citenamefont {Cabello}(2008)}]{Cabello2008b}%
  \BibitemOpen
  \bibfield  {author} {\bibinfo {author} {\bibfnamefont {A.}~\bibnamefont
  {Cabello}},\ }\href {https://doi.org/10.1103/PhysRevLett.101.210401}
  {\bibfield  {journal} {\bibinfo  {journal} {Phys. Rev. Lett.}\ }\textbf
  {\bibinfo {volume} {101}},\ \bibinfo {pages} {210401} (\bibinfo {year}
  {2008})}\BibitemShut {NoStop}%
\bibitem [{\citenamefont {Badzia\ifmmode~\mbox{\c{}}\else \c{}\fi{}g}\ \emph
  {et~al.}(2009)\citenamefont {Badzia\ifmmode~\mbox{\c{}}\else \c{}\fi{}g},
  \citenamefont {Bengtsson}, \citenamefont {Cabello},\ and\ \citenamefont
  {Pitowsky}}]{Badziag2009}%
  \BibitemOpen
  \bibfield  {author} {\bibinfo {author} {\bibfnamefont {P.}~\bibnamefont
  {Badzia\ifmmode~\mbox{\c{}}\else \c{}\fi{}g}}, \bibinfo {author}
  {\bibfnamefont {I.}~\bibnamefont {Bengtsson}}, \bibinfo {author}
  {\bibfnamefont {A.}~\bibnamefont {Cabello}},\ and\ \bibinfo {author}
  {\bibfnamefont {I.}~\bibnamefont {Pitowsky}},\ }\href
  {https://doi.org/10.1103/PhysRevLett.103.050401} {\bibfield  {journal}
  {\bibinfo  {journal} {Phys. Rev. Lett.}\ }\textbf {\bibinfo {volume} {103}},\
  \bibinfo {pages} {050401} (\bibinfo {year} {2009})}\BibitemShut {NoStop}%
\bibitem [{\citenamefont {Klyachko}\ \emph {et~al.}(2008)\citenamefont
  {Klyachko}, \citenamefont {Can}, \citenamefont
  {Binicio\ifmmode~\breve{g}\else \u{g}\fi{}lu},\ and\ \citenamefont
  {Shumovsky}}]{Kcbs2008}%
  \BibitemOpen
  \bibfield  {author} {\bibinfo {author} {\bibfnamefont {A.~A.}\ \bibnamefont
  {Klyachko}}, \bibinfo {author} {\bibfnamefont {M.~A.}\ \bibnamefont {Can}},
  \bibinfo {author} {\bibfnamefont {S.}~\bibnamefont
  {Binicio\ifmmode~\breve{g}\else \u{g}\fi{}lu}},\ and\ \bibinfo {author}
  {\bibfnamefont {A.~S.}\ \bibnamefont {Shumovsky}},\ }\href
  {https://doi.org/10.1103/PhysRevLett.101.020403} {\bibfield  {journal}
  {\bibinfo  {journal} {Phys. Rev. Lett.}\ }\textbf {\bibinfo {volume} {101}},\
  \bibinfo {pages} {020403} (\bibinfo {year} {2008})}\BibitemShut {NoStop}%
\bibitem [{\citenamefont {Augusiak}\ \emph {et~al.}(2011)\citenamefont
  {Augusiak}, \citenamefont {Stasi\ifmmode~\acute{n}\else \'{n}\fi{}ska},
  \citenamefont {Hadley}, \citenamefont {Korbicz}, \citenamefont {Lewenstein},\
  and\ \citenamefont {Ac\'{\i}n}}]{Augusiak2011}%
  \BibitemOpen
  \bibfield  {author} {\bibinfo {author} {\bibfnamefont {R.}~\bibnamefont
  {Augusiak}}, \bibinfo {author} {\bibfnamefont {J.}~\bibnamefont
  {Stasi\ifmmode~\acute{n}\else \'{n}\fi{}ska}}, \bibinfo {author}
  {\bibfnamefont {C.}~\bibnamefont {Hadley}}, \bibinfo {author} {\bibfnamefont
  {J.~K.}\ \bibnamefont {Korbicz}}, \bibinfo {author} {\bibfnamefont
  {M.}~\bibnamefont {Lewenstein}},\ and\ \bibinfo {author} {\bibfnamefont
  {A.}~\bibnamefont {Ac\'{\i}n}},\ }\href
  {https://doi.org/10.1103/PhysRevLett.107.070401} {\bibfield  {journal}
  {\bibinfo  {journal} {Phys. Rev. Lett.}\ }\textbf {\bibinfo {volume} {107}},\
  \bibinfo {pages} {070401} (\bibinfo {year} {2011})}\BibitemShut {NoStop}%
\bibitem [{\citenamefont {Augusiak}\ \emph {et~al.}(2012)\citenamefont
  {Augusiak}, \citenamefont {Fritz}, \citenamefont {Kotowski}, \citenamefont
  {Kotowski}, \citenamefont {Paw\l{}owski}, \citenamefont {Lewenstein},\ and\
  \citenamefont {Ac\'{\i}n}}]{Augusiak2012}%
  \BibitemOpen
  \bibfield  {author} {\bibinfo {author} {\bibfnamefont {R.}~\bibnamefont
  {Augusiak}}, \bibinfo {author} {\bibfnamefont {T.}~\bibnamefont {Fritz}},
  \bibinfo {author} {\bibfnamefont {M.}~\bibnamefont {Kotowski}}, \bibinfo
  {author} {\bibfnamefont {M.}~\bibnamefont {Kotowski}}, \bibinfo {author}
  {\bibfnamefont {M.}~\bibnamefont {Paw\l{}owski}}, \bibinfo {author}
  {\bibfnamefont {M.}~\bibnamefont {Lewenstein}},\ and\ \bibinfo {author}
  {\bibfnamefont {A.}~\bibnamefont {Ac\'{\i}n}},\ }\href
  {https://doi.org/10.1103/PhysRevA.85.042113} {\bibfield  {journal} {\bibinfo
  {journal} {Phys. Rev. A}\ }\textbf {\bibinfo {volume} {85}},\ \bibinfo
  {pages} {042113} (\bibinfo {year} {2012})}\BibitemShut {NoStop}%
\bibitem [{\citenamefont {Fritz}\ \emph {et~al.}(2013)\citenamefont {Fritz},
  \citenamefont {Sainz}, \citenamefont {Augusiak}, \citenamefont {Brask},
  \citenamefont {Chaves}, \citenamefont {Leverrier},\ and\ \citenamefont
  {Acín}}]{Fritz2013}%
  \BibitemOpen
  \bibfield  {author} {\bibinfo {author} {\bibfnamefont {T.}~\bibnamefont
  {Fritz}}, \bibinfo {author} {\bibfnamefont {A.}~\bibnamefont {Sainz}},
  \bibinfo {author} {\bibfnamefont {R.}~\bibnamefont {Augusiak}}, \bibinfo
  {author} {\bibfnamefont {J.~B.}\ \bibnamefont {Brask}}, \bibinfo {author}
  {\bibfnamefont {R.}~\bibnamefont {Chaves}}, \bibinfo {author} {\bibfnamefont
  {A.}~\bibnamefont {Leverrier}},\ and\ \bibinfo {author} {\bibfnamefont
  {A.}~\bibnamefont {Acín}},\ }\bibfield  {journal} {\bibinfo  {journal}
  {Nature Communications}\ }\textbf {\bibinfo {volume} {4}},\ \href
  {https://doi.org/10.1038/ncomms3263} {10.1038/ncomms3263} (\bibinfo {year}
  {2013})\BibitemShut {NoStop}%
\bibitem [{\citenamefont {Sainz}\ \emph {et~al.}(2014)\citenamefont {Sainz},
  \citenamefont {Fritz}, \citenamefont {Augusiak}, \citenamefont {Brask},
  \citenamefont {Chaves}, \citenamefont {Leverrier},\ and\ \citenamefont
  {Ac\'{\i}n}}]{Sainz2014}%
  \BibitemOpen
  \bibfield  {author} {\bibinfo {author} {\bibfnamefont {A.~B.}\ \bibnamefont
  {Sainz}}, \bibinfo {author} {\bibfnamefont {T.}~\bibnamefont {Fritz}},
  \bibinfo {author} {\bibfnamefont {R.}~\bibnamefont {Augusiak}}, \bibinfo
  {author} {\bibfnamefont {J.~B.}\ \bibnamefont {Brask}}, \bibinfo {author}
  {\bibfnamefont {R.}~\bibnamefont {Chaves}}, \bibinfo {author} {\bibfnamefont
  {A.}~\bibnamefont {Leverrier}},\ and\ \bibinfo {author} {\bibfnamefont
  {A.}~\bibnamefont {Ac\'{\i}n}},\ }\href
  {https://doi.org/10.1103/PhysRevA.89.032117} {\bibfield  {journal} {\bibinfo
  {journal} {Phys. Rev. A}\ }\textbf {\bibinfo {volume} {89}},\ \bibinfo
  {pages} {032117} (\bibinfo {year} {2014})}\BibitemShut {NoStop}%
\bibitem [{\citenamefont {T\'oth}\ \emph {et~al.}(2015)\citenamefont {T\'oth},
  \citenamefont {Moroder},\ and\ \citenamefont {G\"uhne}}]{Geza2015}%
  \BibitemOpen
  \bibfield  {author} {\bibinfo {author} {\bibfnamefont {G.}~\bibnamefont
  {T\'oth}}, \bibinfo {author} {\bibfnamefont {T.}~\bibnamefont {Moroder}},\
  and\ \bibinfo {author} {\bibfnamefont {O.}~\bibnamefont {G\"uhne}},\ }\href
  {https://doi.org/10.1103/PhysRevLett.114.160501} {\bibfield  {journal}
  {\bibinfo  {journal} {Phys. Rev. Lett.}\ }\textbf {\bibinfo {volume} {114}},\
  \bibinfo {pages} {160501} (\bibinfo {year} {2015})}\BibitemShut {NoStop}%
\bibitem [{\citenamefont {Lov{\'a}sz}(1979)}]{Lovasz1979}%
  \BibitemOpen
  \bibfield  {author} {\bibinfo {author} {\bibfnamefont {L.}~\bibnamefont
  {Lov{\'a}sz}},\ }\href@noop {} {\bibfield  {journal} {\bibinfo  {journal}
  {IEEE Transactions on Information theory}\ }\textbf {\bibinfo {volume}
  {25}},\ \bibinfo {pages} {1} (\bibinfo {year} {1979})}\BibitemShut {NoStop}%
\bibitem [{\citenamefont {Cabello}\ \emph {et~al.}(2013)\citenamefont
  {Cabello}, \citenamefont {Danielsen}, \citenamefont {L\'opez-Tarrida},\ and\
  \citenamefont {Portillo}}]{Tarrida2013}%
  \BibitemOpen
  \bibfield  {author} {\bibinfo {author} {\bibfnamefont {A.}~\bibnamefont
  {Cabello}}, \bibinfo {author} {\bibfnamefont {L.~E.}\ \bibnamefont
  {Danielsen}}, \bibinfo {author} {\bibfnamefont {A.~J.}\ \bibnamefont
  {L\'opez-Tarrida}},\ and\ \bibinfo {author} {\bibfnamefont {J.~R.}\
  \bibnamefont {Portillo}},\ }\href
  {https://doi.org/10.1103/PhysRevA.88.032104} {\bibfield  {journal} {\bibinfo
  {journal} {Phys. Rev. A}\ }\textbf {\bibinfo {volume} {88}},\ \bibinfo
  {pages} {032104} (\bibinfo {year} {2013})}\BibitemShut {NoStop}%
\bibitem [{\citenamefont {Alon}\ and\ \citenamefont
  {Lovász}(2001)}]{Alon2001}%
  \BibitemOpen
  \bibfield  {author} {\bibinfo {author} {\bibfnamefont {N.}~\bibnamefont
  {Alon}}\ and\ \bibinfo {author} {\bibfnamefont {L.}~\bibnamefont {Lovász}},\
  }\href {https://doi.org/https://doi.org/10.1006/jcta.2000.3122} {\bibfield
  {journal} {\bibinfo  {journal} {Journal of Combinatorial Theory, Series A}\
  }\textbf {\bibinfo {volume} {95}},\ \bibinfo {pages} {169} (\bibinfo {year}
  {2001})}\BibitemShut {NoStop}%
\bibitem [{\citenamefont {Lov{\'a}sz}\ and\ \citenamefont
  {Vesztergombi}(2009)}]{Lovasz2009}%
  \BibitemOpen
  \bibfield  {author} {\bibinfo {author} {\bibfnamefont {L.}~\bibnamefont
  {Lov{\'a}sz}}\ and\ \bibinfo {author} {\bibfnamefont {K.}~\bibnamefont
  {Vesztergombi}},\ }\href@noop {} {\bibfield  {journal} {\bibinfo  {journal}
  {Paul Erdos and his Mathematics}\ }\textbf {\bibinfo {volume} {2}} (\bibinfo
  {year} {2009})}\BibitemShut {NoStop}%
\bibitem [{\citenamefont {Broere}\ \emph {et~al.}(1988)\citenamefont {Broere},
  \citenamefont {Döman},\ and\ \citenamefont {Ridley}}]{Broere1988}%
  \BibitemOpen
  \bibfield  {author} {\bibinfo {author} {\bibfnamefont {I.}~\bibnamefont
  {Broere}}, \bibinfo {author} {\bibfnamefont {D.}~\bibnamefont {Döman}},\
  and\ \bibinfo {author} {\bibfnamefont {J.~N.}\ \bibnamefont {Ridley}},\
  }\href {https://doi.org/10.1080/16073606.1988.9631945} {\bibfield  {journal}
  {\bibinfo  {journal} {Quaestiones Mathematicae}\ }\textbf {\bibinfo {volume}
  {11}},\ \bibinfo {pages} {91} (\bibinfo {year} {1988})}\BibitemShut {NoStop}%
\bibitem [{\citenamefont {Magsino}\ \emph {et~al.}(2019)\citenamefont
  {Magsino}, \citenamefont {Mixon},\ and\ \citenamefont
  {Parshall}}]{Magsino2019}%
  \BibitemOpen
  \bibfield  {author} {\bibinfo {author} {\bibfnamefont {M.}~\bibnamefont
  {Magsino}}, \bibinfo {author} {\bibfnamefont {D.~G.}\ \bibnamefont {Mixon}},\
  and\ \bibinfo {author} {\bibfnamefont {H.}~\bibnamefont {Parshall}},\ }\href
  {https://arxiv.org/abs/1907.05971} {\bibinfo {title} {Linear programming
  bounds for cliques in paley graphs}} (\bibinfo {year} {2019}),\ \Eprint
  {https://arxiv.org/abs/1907.05971} {arXiv:1907.05971 [math.CO]} \BibitemShut
  {NoStop}%
\bibitem [{\citenamefont {Cabello}(2013)}]{Cabello2013}%
  \BibitemOpen
  \bibfield  {author} {\bibinfo {author} {\bibfnamefont {A.}~\bibnamefont
  {Cabello}},\ }\href {https://doi.org/10.1103/PhysRevLett.110.060402}
  {\bibfield  {journal} {\bibinfo  {journal} {Phys. Rev. Lett.}\ }\textbf
  {\bibinfo {volume} {110}},\ \bibinfo {pages} {060402} (\bibinfo {year}
  {2013})}\BibitemShut {NoStop}%
\bibitem [{\citenamefont {Marques}\ \emph {et~al.}(2014)\citenamefont
  {Marques}, \citenamefont {Ahrens}, \citenamefont {Nawareg}, \citenamefont
  {Cabello},\ and\ \citenamefont {Bourennane}}]{Marques2014}%
  \BibitemOpen
  \bibfield  {author} {\bibinfo {author} {\bibfnamefont {B.}~\bibnamefont
  {Marques}}, \bibinfo {author} {\bibfnamefont {J.}~\bibnamefont {Ahrens}},
  \bibinfo {author} {\bibfnamefont {M.}~\bibnamefont {Nawareg}}, \bibinfo
  {author} {\bibfnamefont {A.}~\bibnamefont {Cabello}},\ and\ \bibinfo {author}
  {\bibfnamefont {M.}~\bibnamefont {Bourennane}},\ }\href
  {https://doi.org/10.1103/PhysRevLett.113.250403} {\bibfield  {journal}
  {\bibinfo  {journal} {Phys. Rev. Lett.}\ }\textbf {\bibinfo {volume} {113}},\
  \bibinfo {pages} {250403} (\bibinfo {year} {2014})}\BibitemShut {NoStop}%
\bibitem [{\citenamefont {Yang}\ \emph {et~al.}(2015)\citenamefont {Yang},
  \citenamefont {Gao}, \citenamefont {Xu}, \citenamefont {Zuo}, \citenamefont
  {Zhang},\ and\ \citenamefont {Wen}}]{Yang2015}%
  \BibitemOpen
  \bibfield  {author} {\bibinfo {author} {\bibfnamefont {Y.-H.}\ \bibnamefont
  {Yang}}, \bibinfo {author} {\bibfnamefont {F.}~\bibnamefont {Gao}}, \bibinfo
  {author} {\bibfnamefont {G.-B.}\ \bibnamefont {Xu}}, \bibinfo {author}
  {\bibfnamefont {H.-J.}\ \bibnamefont {Zuo}}, \bibinfo {author} {\bibfnamefont
  {Z.-C.}\ \bibnamefont {Zhang}},\ and\ \bibinfo {author} {\bibfnamefont
  {Q.-Y.}\ \bibnamefont {Wen}},\ }\href@noop {} {\bibfield  {journal} {\bibinfo
   {journal} {Scientific Reports}\ }\textbf {\bibinfo {volume} {5}},\ \bibinfo
  {pages} {11963} (\bibinfo {year} {2015})}\BibitemShut {NoStop}%
\end{thebibliography}
%
\end{document}